\title{\vspace{-1cm}  A unified nonparametric fiducial approach to interval-censored data}
\author{
Yifan Cui\thanks{Department of Statistics and Data Science, National University of Singapore}~
Jan Hannig\thanks{Department of Statistics and Operations Research, UNC-Chapel Hill}~
Michael Kosorok\thanks{Department of Biostatistics \& Department of Statistics and Operations Research, UNC-Chapel Hill}}
\date{}
\newtheorem{assumption}{Assumption}
\newtheorem{theorem}{Theorem}[section]
\newtheorem{lemma}{Lemma}[section] 
\newtheorem{proposition}{Proposition}[section]
\newtheorem{example}{Example}
\newcommand{\yifan}[1]{\textcolor{black}{#1}}
\newcommand{\bu}{\vv{u}}
\newcommand{\bv}{\vv{v}}
\newcommand{\bU}{\vv{U}}
\newcommand{\bl}{\vv{l}}
\newcommand{\br}{\vv{r}}
\newcommand{\bL}{\vv{L}}
\newcommand{\bR}{\vv{R}}
\newcommand{\bt}{\vv{t}}
\newcommand{\tunif}{{\text{Unif}}}
\newcommand{\tbeta}{{\text{Beta}}}
\newcommand{\texp}{{\text{Exp}}}
\newcommand{\tgamma}{{\text{Gamma}}}
\begin{document}
\maketitle
\vspace{-2cm}
\abstract{
 Censored data, where the event time is partially observed, are
challenging for survival probability estimation.
In this paper, we introduce a novel nonparametric fiducial approach to interval-censored data, including right-censored, current status, case II censored, and mixed case censored data.
The proposed approach leveraging a simple Gibbs sampler has a useful property of being ``one size fits all'', i.e., the proposed approach automatically adapts to all types of non-informative censoring mechanisms.
As shown in the extensive simulations, the proposed fiducial confidence intervals significantly outperform existing methods in terms of both coverage and length.
In addition, the proposed fiducial point estimator has much smaller estimation errors than the nonparametric maximum likelihood estimator.
Furthermore, we apply the proposed method to Austrian rubella data and a study of hemophiliacs infected with the human immunodeficiency virus. The strength of the proposed fiducial approach is not only estimation and uncertainty quantification but also its automatic adaptation to a variety of censoring mechanisms.}

\vspace{0.1cm}

\noindent {\bf keywords:}
 Censored data, Current status data, Fiducial inference, Gibbs sampler, Mixed case censoring, Survival analysis

\section{Introduction}\label{sec:intro}
Censored survival data are \yifan{ubiquitous} in biomedical studies when actual clinical outcomes, such as death, disease recurrence, or distant metastasis, may not be directly observable for such reasons as periodic follow-up and early dropout.
\yifan{Interval-censored data arise when a random variable of interest can not be observed,
but can only be determined to lie in an interval obtained from a sequence
of inspection times, i.e., a failure time $T$ is known only to lie within an interval $I=(L,R]$,} which is more challenging than right-censored data because much less information is contained in such intervals.
\yifan{We refer to \cite{huang1997interval,sun2007statistical} for a comprehensive review on interval-censored data and 
\cite{jacobsen1995} who treat interval-censoring as a special case of coarsening at random.}
One extreme case is current status data where the survival status of a subject is inspected at a
single random monitoring time, thus yielding an extreme form of interval-censoring.

\cite{peto1973} proposed a nonparametric maximum likelihood estimation (NPMLE) of the survival function for interval-censored data using the Newton-Rapshon algorithm.
\cite{turnbull1976empirical} showed that the NPMLE of the survival distribution is only unique up to a set of intervals, which may be called the innermost intervals, also known as the Turnbull intervals or the regions of the maximal cliques. \cite{turnbull1976empirical} then suggested a self-consistent expectation maximization to compute the maximum likelihood estimators.
While there is no closed form representation of the NPMLE based on general interval-censored data, an NPMLE
with a convergence guarantee was developed in \cite{groeneboom1992information} for current status and case II censoring data. 
\cite{wellner1995interval} studied the consistency of the NPMLE
where each subject gets exactly $K$ examination times.
The consistency of the NPMLE under the mixed case censoring has been studied by \cite{van2000preservation,anton2000}.

Constructing pointwise confidence intervals for the distribution function of $T$ at a given time $t$ is more challenging in a general interval-censoring setting. It is known that bootstrapping from the NPMLE of the distribution function is inconsistent for both the current status and case II censoring models \yifan{\citep{kosorok2008bootstrapping,sen2010inconsistency,sen2015model}}.
\cite{banerjee2005confidence} proposed likelihood ratio-based confidence intervals for current status data.
Furthermore, \cite{sen2007pseudolikelihood} proposed a pseudo-likelihood approach to mixed case censoring data, which may not be as efficient as the NPMLE and does not achieve nominal coverage.
While the $m$-out-of-$n$ bootstrap \citep{lee2006} and subsampling methods \citep{politis1999subsampling} are consistent, and the corresponding confidence intervals achieve nominal coverage, their resulting confidence intervals are too wide. It is also important to note that, to our knowledge, all previous methods,  including \cite{banerjee2005confidence,lee2006,sen2007pseudolikelihood,politis1999subsampling,sen2015model}, require the choice of tuning parameters such as the block size.

This paper introduces a novel nonparametric fiducial approach to interval-censored data. Fiducial inference can be traced back to a series of articles by
R. A. Fisher \citep{Fisher1930,Fisher1933} who introduced the concept as a potential replacement of
the Bayesian posterior distribution.
\yifan{
 Posterior distribution was at the beginning of 20th century called ``inverse probability'' in contrast to ``direct probability'', better known as likelihood \citep{Fisher1922}.
From a mathematical point of view, the difference between the posterior and fiducial distributions is due to the way the distribution on the parameter space is defined. The former uses a conditional probability which requires selecting a prior probability. The latter transports the probability distribution from a given distribution on an auxiliary space using a measurable function, which we call the data generating equation. 
In both cases, we end up with a probability measure that is not uniquely determined by the likelihood as a change of prior in one case, and the data generating equation in the other can affect the resulting answer.} 
Other related approaches include Dempster-Shafer theory \citep{Dempster:1968vd,shafer1976mathematical}, inferential models \citep{MartinLiu2013a,martin2015inferential}, confidence distributions \citep{SinghXieStrawderman2005, XieSingh2013, HjortSchweder2018}, and objective Bayesian inference \citep{BergerBernardoSun2009, BergerBernardoSun2012}. Many additional references can be found in \cite{XieSingh2013}, \cite{schweder2016confidence},  \cite{hannig2016generalized}, and \cite{cui2021confidence}.
  Since the mid 2000s, there has been renewed interest in modifications of fiducial inference.
  \cite{WangYH2000, TaraldsenLindquist2013} showed how fiducial distributions naturally arise within a decision theoretical framework.
  \cite{hannig2016generalized} formalized the mathematical definition of generalized fiducial distribution. Having a formal definition allowed the application of fiducial inference to other fields, such as psychology \citep{liu2016generalized,liu2017generalized,Liu2019,Neupert2019} and forensic science \citep{HANNIG2019572}.
\yifan{\cite{cuihannig2019} considered a nonparametric fiducial approach to right-censored data which is a special type of interval-censored data.
Their method does not use a Gibbs sampler and applies  only to right-censored data. 
}

The proposed fiducial approach implemented by a simple Gibbs sampler has a useful property of being``one size fits all'', i.e., the proposed approach automatically adapts to all types of non-informative censoring:

(i) Exact data: $L=T, R=T$;

(ii) \yifan{Right-censored data}: $R=\infty$ \yifan{for right-censored observations;}

(iii) \yifan{Left-censored data}: $L=0$ \yifan{for left-censored observations;}

(iv) Case I censoring (current status data): \yifan{only one inspection time is available}, i.e., either $L=0$ or $R=\infty$;


(v) Case $K$ censoring: $K$ observation/inspection times \yifan{$C_1,\ldots,C_K$ with observations $L,R\in\{0,C_1,\ldots,C_K,\infty\}$; $K$ might tend to
infinity as the sample size tends to infinity \citep{lawless2006models}}.

(vi) Mixed case censoring: an arbitrary number of observation/inspection times, i.e., a mixture of the above censoring mechanisms.

We use the fiducial distribution to construct a point estimator and pointwise confidence \mbox{intervals} for the distribution function.
\yifan{In this paper, we perform extensive simulations following the configurations considered in the previous literature \citep{banerjee2005confidence,sen2007pseudolikelihood}. In these simulations, the proposed confidence interval maintains coverage in situations where most existing methods have coverage problems, and meanwhile, it has the shortest length
among all the confidence intervals. In addition, the proposed fiducial point estimator has the smallest
mean squared error compared to various NPMLE estimators.}
Furthermore, we apply the proposed approach to Austrian rubella data and a study of hemophiliacs infected with the human immunodeficiency virus.

The advantages of the proposed fiducial approach are two-fold: 1) The proposed fiducial distribution and corresponding algorithm adapt to a variety of censoring mechanisms automatically, which is a substantial advantage when information about the inspection times is not available. This is somewhat of an art, and our contribution appears valuable for such scientific applications;
 2) In our simulations, the proposed fiducial approach significantly outperforms existing methods in terms of both point estimators and confidence intervals.

\section{Methodology}\label{sec:method}

\subsection{Setup and notation}
Before describing the proposed fiducial approach, we first introduce some general notation. 
Suppose the observed data are $\{I_i=(l_i,r_i], i=1,\ldots, n\}$.
For the censoring mechanism, we consider non-informative censoring \citep{oller2004interval} under which intervals do not provide any further information than the fact that the event time lies in the interval:
\begin{equation}\label{eqr:non_dependence}
\Pr(T\leq t| L=l,R=r, L<T\leq R) = \Pr(T\leq t|l<T\leq r).
\end{equation}
\yifan{We refer to \cite{sun2007statistical,kalbfleisch2011statistical} for the possibility of including  covariates. Suppose} we are interested in the unknown distribution function \yifan{$F(t)$ of the survival time $T$} at time $t$.
\cite{law1992effects} showed through simulations that treating observations as right-censored data after a midpoint imputation does not preserve type I error. There is clearly a need for methods specifically designed for interval-censored data.

\subsection{A data generating equation perspective}\label{sec:2.2}

In this section, we first explain the definition of a fiducial distribution and then demonstrate how to apply it to interval-censored data. \yifan{This derivation will be conditional on the observed $l_i,r_i,\ i=1,\ldots, n$.  The common assumption~\eqref{eqr:non_dependence} allows us to ignore the potential dependence between $T$ and $L,R$, and treat the observed $l_i,r_i$ as fixed. We provide an alternative derivation of the same generalized fiducial distribution in Appendix~\ref{app:censoring}, where we explicitly model the relationship between failure and censoring times treating $L$ and $R$ as random.
}

We start by expressing the event times $T_i$ using
\begin{equation}\label{eq:DGE}
T_i=F^{-1}(U_i),\quad i=1,\ldots n,
\end{equation}
where $U_i$ are independent $\tunif(0,1)$ and $F^{-1}(u) = \inf\{t : F(t)\geq u \}$. 

Recall that we do not observe the exact values of $T_i$ but instead observe their lower and upper bounds $(l_i,r_i]$.
By a simple calculation,
\begin{align*}
F^{-1}(u_i) > l_i \text{~if and only if~} F(l_i) < u_i,\\
F^{-1}(u_i) \leq r_i \text{~if and only if~} F(r_i) \geq u_i.
\end{align*}
Consequently, the inverse of the data generating equation \eqref{eq:DGE} expressed by the observed data is
\begin{equation}
Q_{\bl,\br}(\bu) =
\{F\, :\ \, F(l_i)< u_i\leq F(r_i), i=1,\ldots,n\}.  \label{eq:constraint}
\end{equation}
Note that $Q_{\bl,\br}(\bu)$ is a set of cumulative distribution functions. By Lemma~\ref{lemma:equiv2} provided in Appendix~\ref{sec:A},
$Q_{\bl,\br}(\bu)\neq\emptyset$ if and only if $\bu$ satisfy:
\begin{equation}\label{eq:constraint2}
\mbox{whenever $r_i \leq l_j$ then $u_i < u_j$.}
\end{equation}

A fiducial distribution is obtained by inverting the data generating equation, 
i.e.,
the distribution of
 $Q_{\bl,\br}(\bU^\star)$,
where $\bU^\star$ is the uniform distribution on the set
$\{\bu^\star : Q_{\bl,\br}(\bu^\star)\neq\emptyset\}$.
\yifan{The random functions defined for each $t$ and $\bU^\star$
\[
F^U(t) \equiv \min\{U^\star_i, \mbox{ for $i$ such that } t < L_i\},
\]
and
\[
F^L(t) \equiv \max\{U^\star_i, \mbox{ for $i$ such that } t\geq R_i\},
\]
where $\min\emptyset =1$ and $\max\emptyset =0$, are non-decreasing and right continuous.
Note that any distribution function lying between $F^U$ and $F^L$ is an element of the closure, in weak topology, of $ Q_{\bl,\br}(\bU^\star)$.
Thus, the functions $F^U$ and $F^L$ will be called the upper and lower fiducial bounds throughout.}

\subsection{A simple Gibbs sampler}\label{sec:Gibbs}
\yifan{In this section, we propose a novel Gibbs sampler to efficiently sample $\bU^\star$.
A sample from the fiducial distribution obtained from the Gibbs sampler can then be used to form a point estimator and confidence intervals for the unknown distribution function $F(t)$ in the same way that posterior samples are used in the Bayesian context.}

We need to generate $\bU^\star$ from the standard uniform distribution on a set described by Equation~\eqref{eq:constraint2}.
We achieve this by a simple Gibbs sampler.
For each fixed $i$, we denote the random vector $\bU^\star$ with the $i$-th observation removed by $\bU^\star_{[-i]}$. If  $\bU^\star$ satisfies the constraint \eqref{eq:constraint2}, so does $\bU^\star_{[-i]}$. The proposed Gibbs sampler is based on the conditional distribution of $ U^\star_i\mid \bU^\star_{[-i]}$,
which is a uniform distribution on a segment determined by \eqref{eq:constraint2}. The details are described in Algorithm~\ref{alg:gibbs}. The proposed Gibbs sampler requires starting points.
We randomly sample $\bu_0$ from independent $\tunif$(0,1) and sort $\bu_0$ according to the order of $(\bl+\br)/2$ as initial points.

\vspace{0.5cm}
\begin{algorithm}[H] 
\SetAlgoLined
\caption{A simple pseudo-algorithm for the fiducial Gibbs sampler} \label{alg:gibbs}
\KwIn{Dataset  $(\bl,\br)$,
$n_{\text{mcmc}}$, $n_{\text{burn}}$, and vector $\bt_{\text{grid}}$ of length $m$.}
\ShowLn Sample $\bu_0$ from independent $\tunif$(0,1)\;
\ShowLn Sort $\bu_0$ according to the order of $(\bl+\br)/2$\;
\ShowLn Run Gibbs sampler using $\bu_0$ as initial values\;
\For{$j = 1$ \textbf{to} $n_{\text{burn}}+n_{\text{mcmc}}$}
{
\For{$i = 1$ \textbf{to} $n$}
{
\ShowLn $\bu_j^\ast=\bu_{j-1}[-i],\bl_j^\ast=\bl[-i], \br_j^\ast=\br[-i]$\;
\ShowLn $v$=which($\bl[i]\geq \br_j^\ast$),~ $a = \max(\bu_j^\ast[v],0)$\;
\ShowLn $w$=which($\br[i]\leq \bl_j^\ast$),~ $b = \min(\bu_j^\ast[w],1)$\;
\ShowLn $\bu_j[i]=\tunif(a,b)$\;
}
Regrid $\bu_j$ according to $\bt_{\text{grid}}$, and denote the lower and upper bounds by $\bu_j^L$ and $\bu_j^U$, respectively\;
}
\ShowLn Interpolate through $\bu^L$ and $\bu^U$ for each Markov chain Monte Carlo sample after burn-in\;
\For{$k = 1$ \textbf{to} $n_{\text{mcmc}}$}{
\vspace{-0.1cm}
\begin{equation*}
\begin{aligned}
& \underset{\bu_k^I}{\text{minimize}}
& &  \sum_{i=1}^{m+1} (u_i-u_{i-1})^2\\ 
& \text{subject to}
& & \bu_k^I \leq \bu_k^U,~~\bu_k^I \geq \bu_k^L
\end{aligned}
\end{equation*}
 where $\bu_k^I= (u_1,\ldots,u_m)^T$, $u_0$, and $u_{m+1}$ are sampled from the independent $\tbeta(1/2,1/2)$ transformed to $(0,\bu^U[1])$ and $(\bu^L[m],1)$, respectively.
}
\Return  the fiducial samples $\bu_k^{U}$, $\bu_k^{L}$, $\bu_k^I$ evaluated on $\bt_{\text{grid}}$, $k=1,\ldots,n_{\text{mcmc}}$.
\label{algorithm}
\end{algorithm}

\yifan{Using this algorithm we generate a fiducial sample $\bU^\star_k,\ k=1,\ldots,n_{\text{mcmc}}$. Based on the fiducial sample,} 
we construct two types of pointwise confidence intervals  \yifan{by finding intervals of a given fiducial probability. Similar to \cite{cuihannig2019},  we define} conservative and linear interpolation intervals, using appropriate quantiles of fiducial samples.
In particular, a $95\%$ conservative confidence interval  is formed by taking the empirical 0.025 quantile of $F_k^{L}(t)$ as a lower limit and the empirical 0.975 quantile of $F_k^{U}(t)$ as an upper limit \citep{Dempster2008, shafer1976mathematical}.
An alternative pointwise interval is based on selecting a suitable representative of \yifan{each $ Q_{\bl,\br}(\bU^\star_k)$}.
We propose to fit a continuous distribution function \yifan{$F^I_k(t)$} by using linear
interpolation via a quadratic programming.
\yifan{The details of the algorithm are provided in Algorithm~1.}
Thus, a 95\% linear interpolation confidence interval for $F(t)$ is formed by using the empirical 0.025 and 0.975 quantiles of $F_k^{I}(t)$.
Finally, we propose to use the pointwise median of the linear interpolation fiducial samples $\{F_k^I(t),k=1\ldots,n_{\text{mcmc}}\}$  as a point estimator for the distribution function. 
Hereinafter, we refer to fiducial confidence intervals as linear interpolation fiducial confidence intervals as we recommend the linear interpolation fiducial samples for practice.

\subsection{Further illustration with two simulated examples}\label{sec:2.3}
To demonstrate the proposed fiducial approach, we present two toy examples in this section. In the first setting, \yifan{current status data,} suppose that the event time \yifan{$T$} and observation time \yifan{$C_1$} both follow the exponential distribution $ \texp (1)$ \citep{banerjee2005confidence,sen2007pseudolikelihood}.
In the second setting \yifan{with case II censoring}, the event time follows a $\tgamma(2,1)$ distribution, and observation times \yifan{$C_1, C_2$} are taken to be $\tunif(0,2)$ and $C_1 + 0.5 + \widetilde C_1$, respectively, with $\widetilde C_1$ independent of $C_1$ and also following $\tunif(0,2)$ \citep{sen2007pseudolikelihood}.

\yifan{In Figure~\ref{newtoy} we show two fiducial samples from the fiducial distribution
$Q_{\bl,\br}(\bU^\star)$ for a small dataset ($n$ = 20) for the first setting, the current status data.
The dashed curves are the lower and upper fiducial bounds, and the solid curve is the corresponding linear approximation.
The crosses correspond to the observations of the type $(0,r_i]$, where on the horizontal axis we plot the $r_i$ and on the vertical axis we show the corresponding $U_i^\star$.
The circles are the observations of the type $(l_j,\infty]$, where on the horizontal axis we plot the $l_j$ and on the vertical axis we again show the $U_j^\star$. Note that the upper fiducial bound has jumps only at values corresponding to some of the circles, with the rest of the circles being above the upper fiducial bound. Similarly, the lower fiducial bound jumps only at locations corresponding to some of the crosses with the rest of the crosses being below it.}

\begin{figure}[h]
\begin{center}
\includegraphics[width=5.5cm]{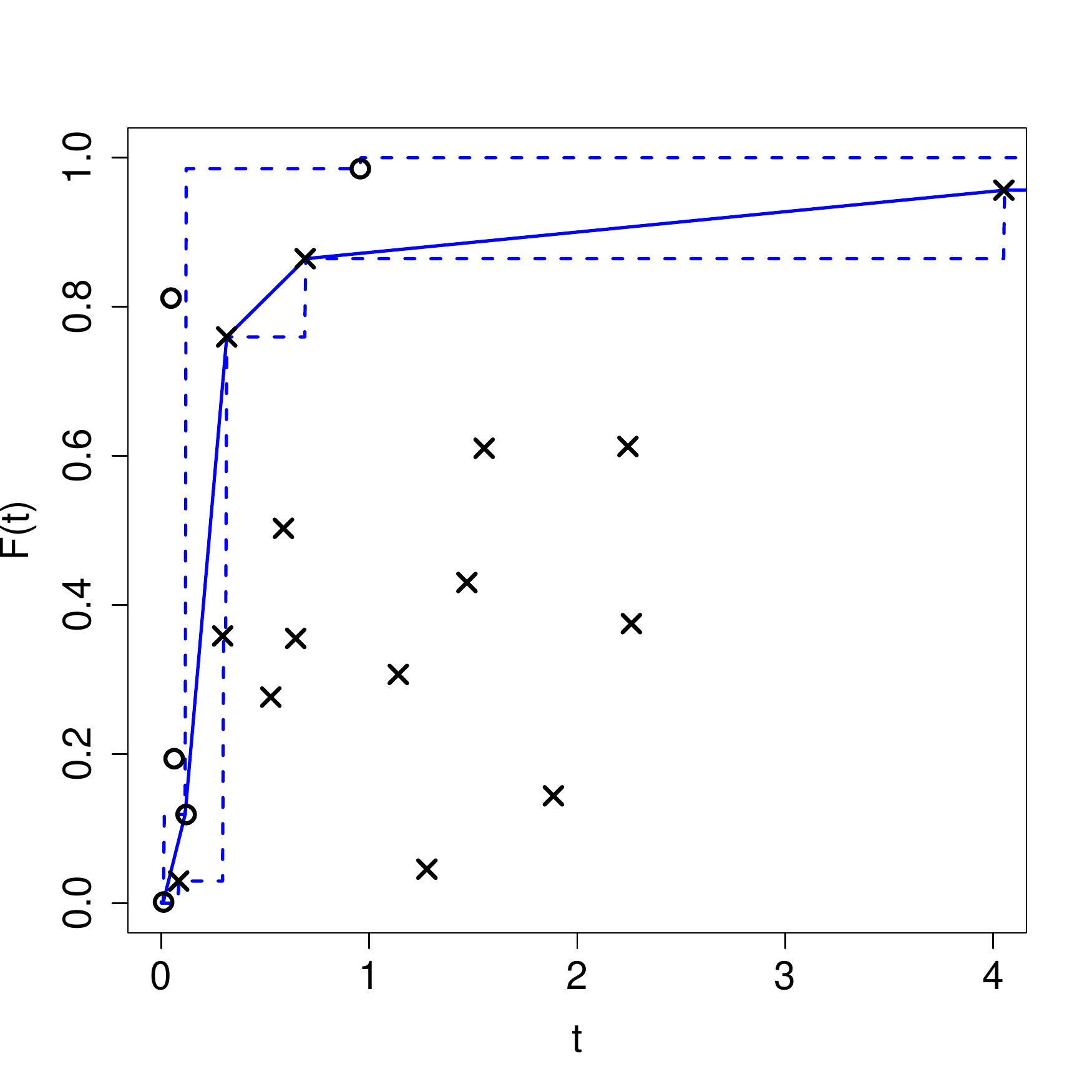}
\includegraphics[width=5.5cm]{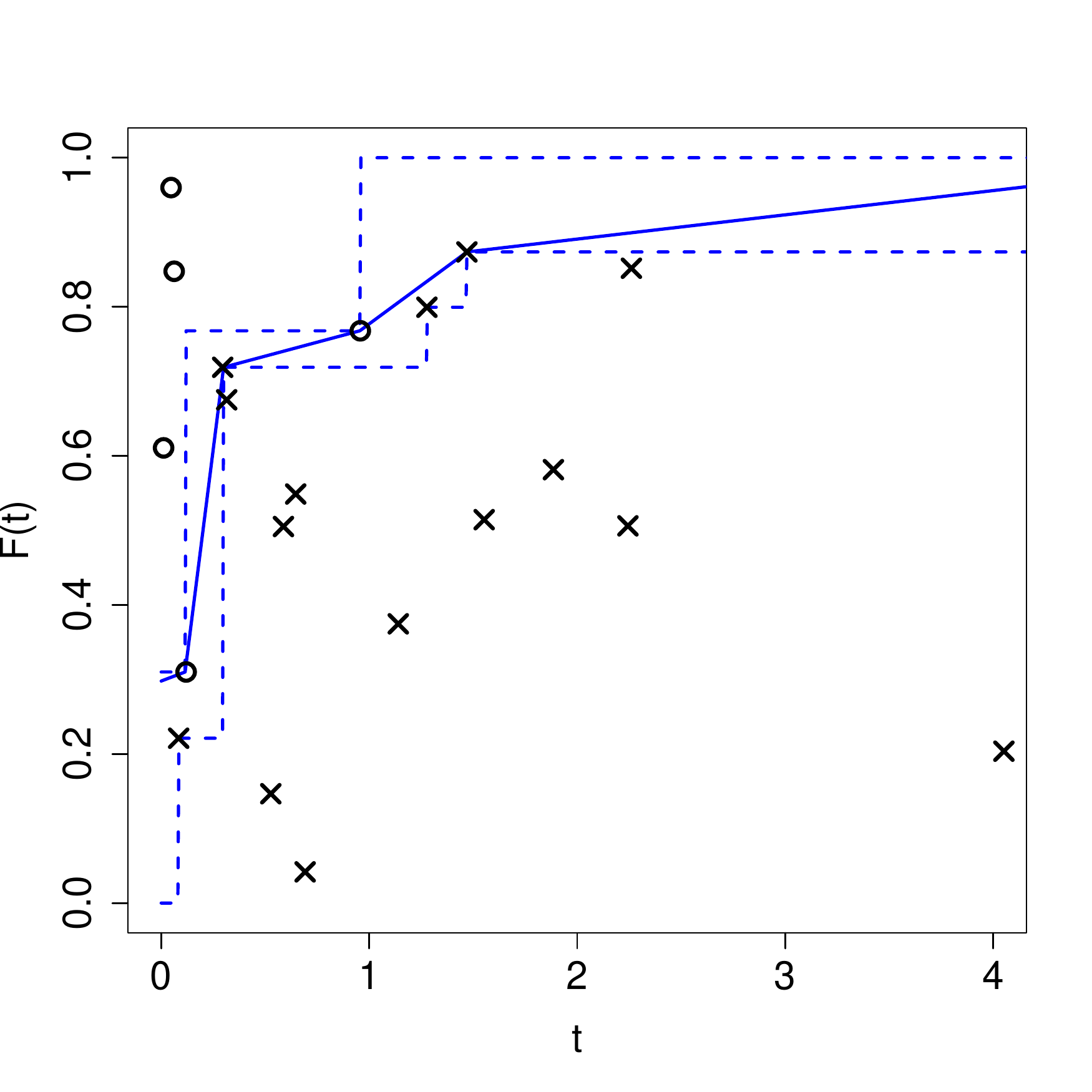}
\end{center}
\caption{\yifan{Two fiducial samples from the fiducial distribution for a small dataset ($n$ = 20) for the first setting, the current status data; the event time and observation time both follow $ \texp (1)$. The crosses correspond to the observations of the type $(0,r_i]$, where on the horizontal axis we plot the $r_i$ and on the vertical axis we show the corresponding $U_i^\star$.
The circles are the observations of the type $(l_j,\infty]$, where on the horizontal axis we plot the $l_j$ and on the vertical axis we again show the $U_j^\star$.}}
\label{newtoy}
\end{figure}

Next, we consider the sample size of the simulated data $n=200$. The fiducial estimates were based on 1000 iterations after 100 burn-in times.
The left panels of Figures~\ref{fig:1} and \ref{fig:2} present the last Markov chain Monte Carlo sample of the lower and upper fiducial bounds as well as the linear interpolation fiducial sample.
As the fiducial distribution reflects the uncertainty, we do not expect every single fiducial curve to be close to the true cumulative distribution function.
Furthermore, the right panels of Figures~\ref{fig:1} and \ref{fig:2} present 95\% linear interpolation confidence intervals and corresponding point estimators, respectively.

\begin{figure}[H]
\begin{center}
\includegraphics[width=5.5cm]{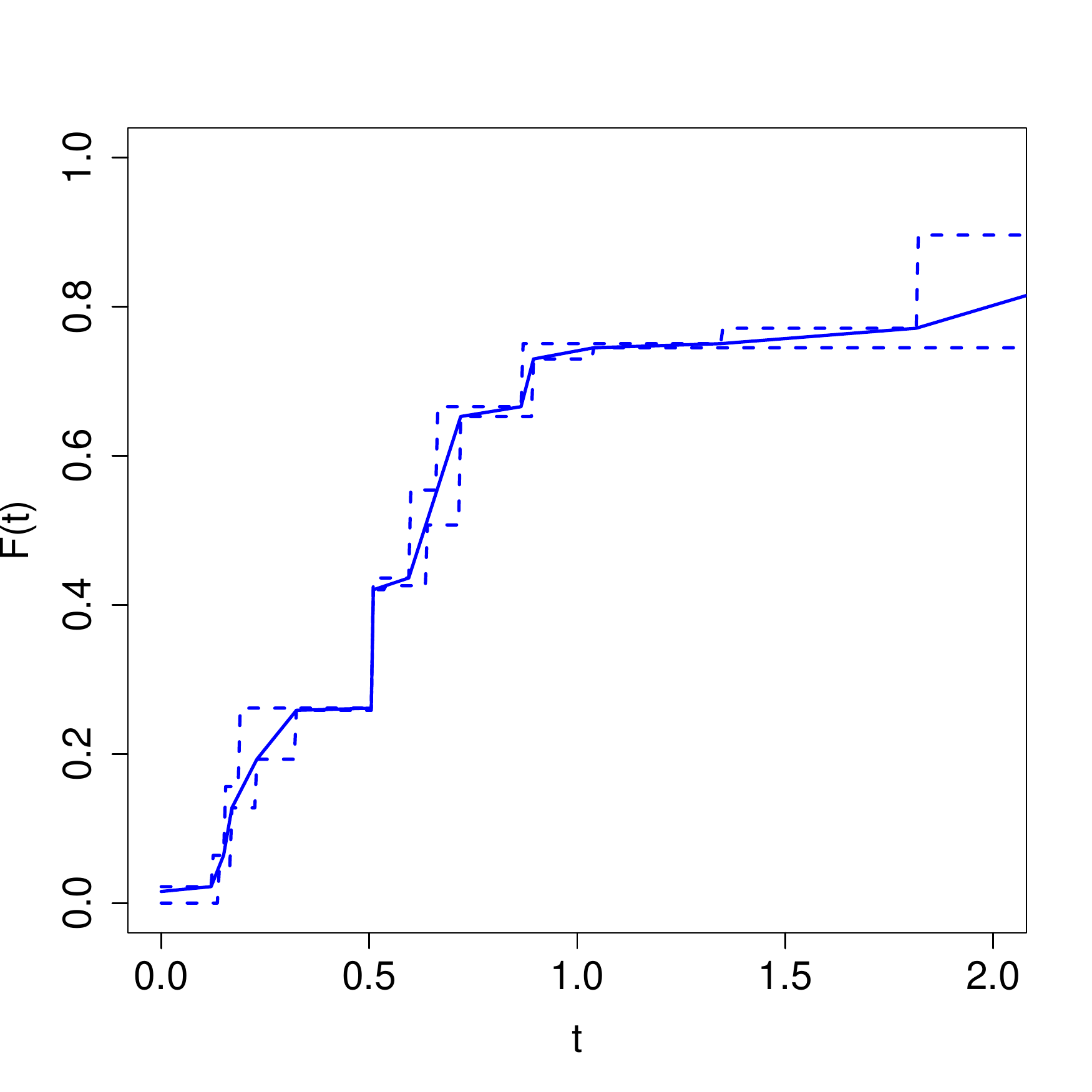}
\includegraphics[width=5.5cm]{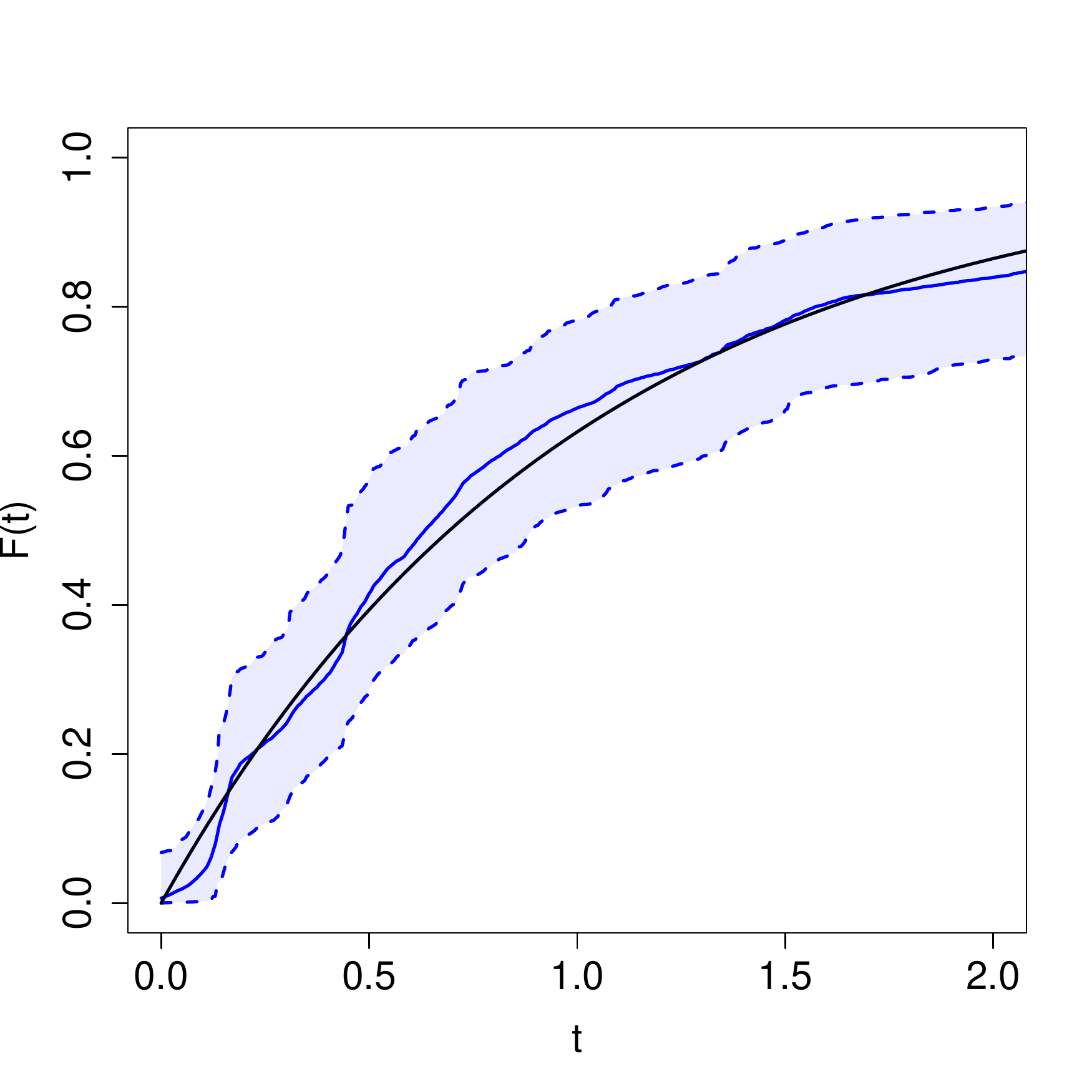}
\end{center}
\caption{Setting 1 (current status data): the event time and observation time both follow $ \texp (1)$. Left panel: The last Markov chain Monte Carlo sample from fiducial distribution. The dashed curves are the realizations of the lower fiducial sample $F^{L}(t)$ and upper fiducial sample $F^{U}(t)$, respectively. The solid line is the linear interpolation $F^{I}(t)$. Right panel: true cumulative distribution function (black line), 95\% confidence interval (dashed blue line) and corresponding point estimator (solid blue line).}
\label{fig:1}
\end{figure}

\begin{figure}[H]
\begin{center}
\includegraphics[width=5.5cm]{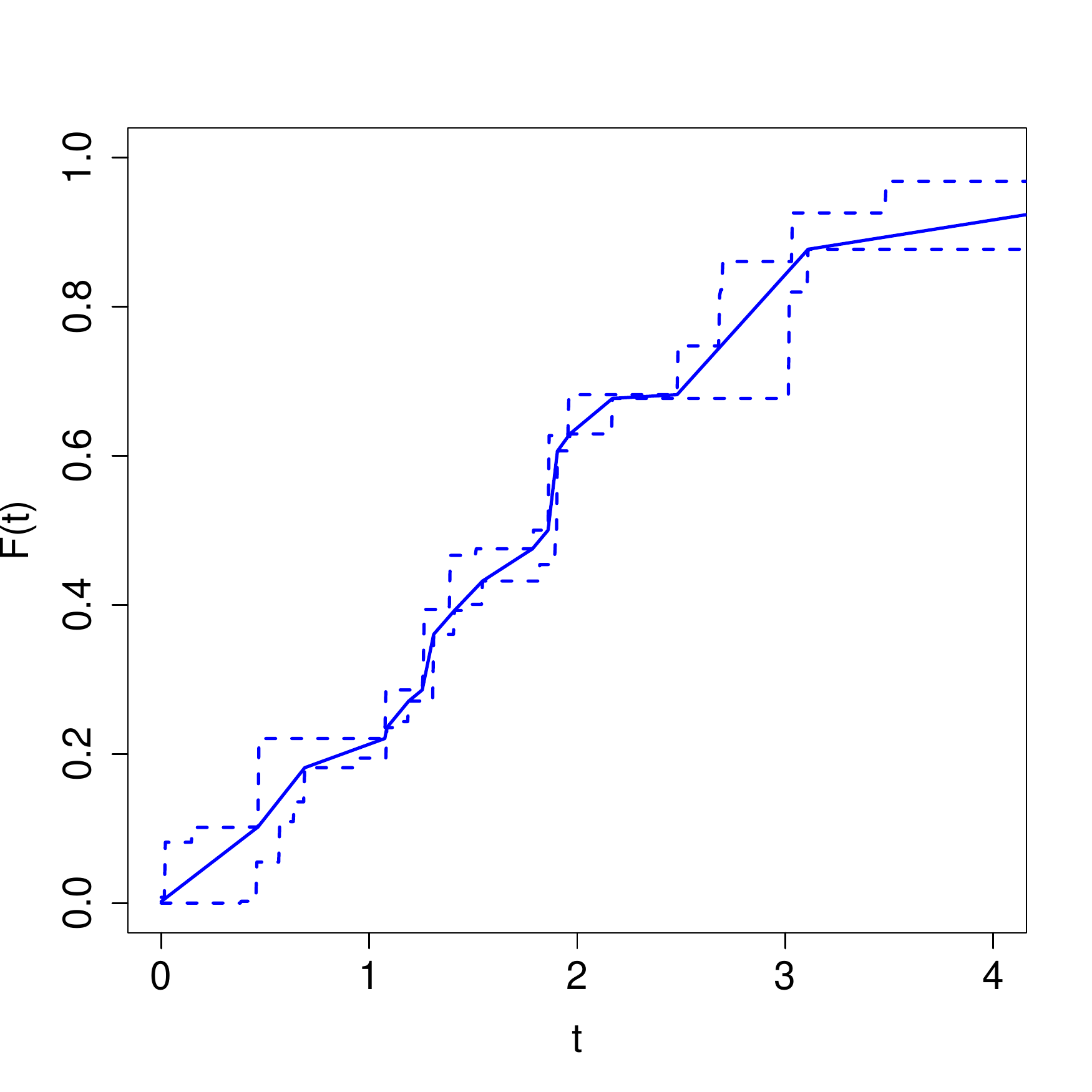}
\includegraphics[width=5.5cm]{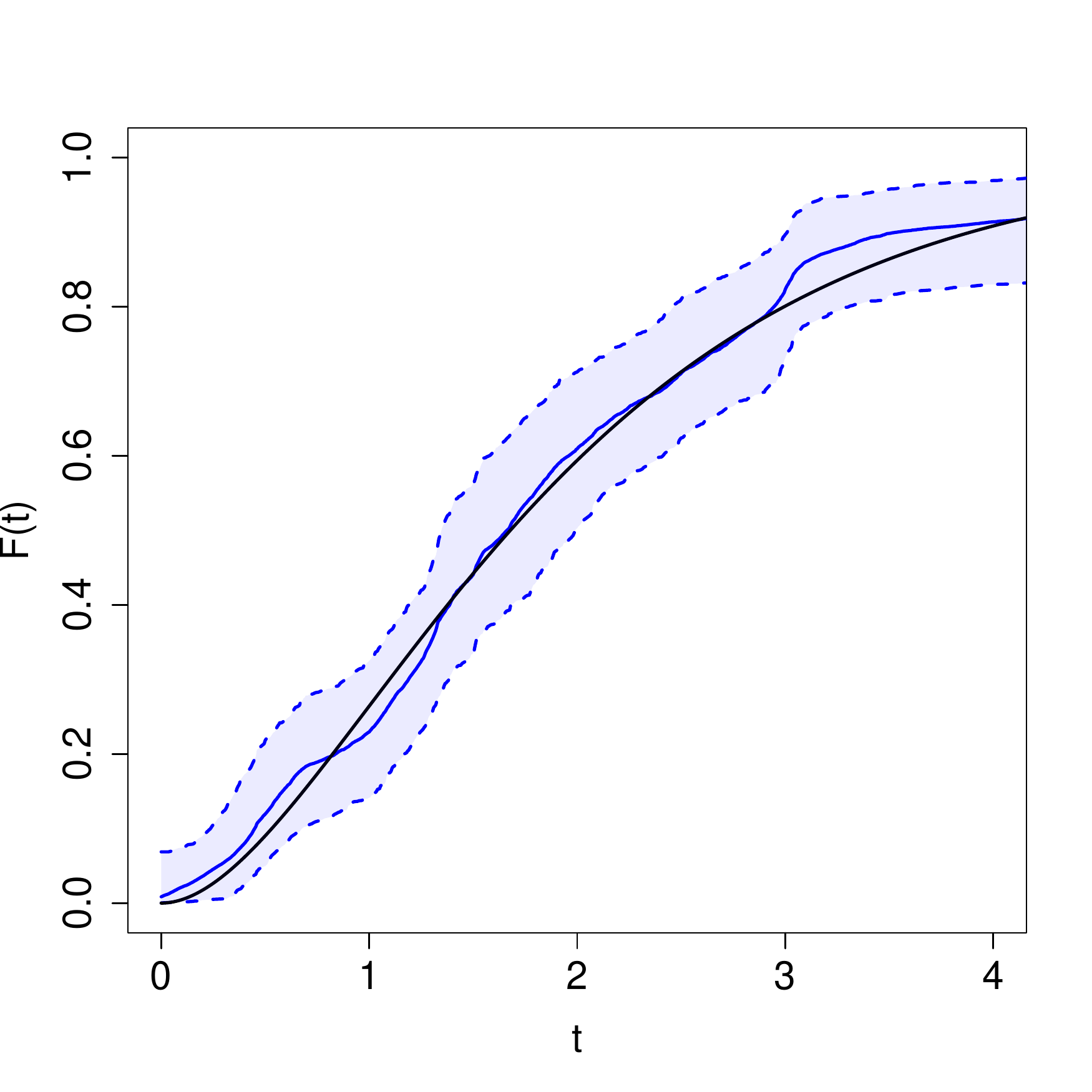}
\end{center}
\caption{Setting 2 (case II censoring): the event time follows $\tgamma(2,1)$, and observation times \yifan{$C_1, C_2$} are $\tunif(0,2)$ and $C_1 + 0.5 + \widetilde C_1$, respectively, with $\widetilde C_1$ independent of $C_1$ and also following $\tunif(0,2)$. Left panel: The last Markov chain Monte Carlo sample from fiducial distribution. The dashed curves are the realizations of the lower fiducial sample $F^{L}(t)$ and upper fiducial sample $F^{U}(t)$, respectively. The solid line is the linear interpolation $F^{I}(t)$. Right panel: true cumulative distribution function (black line), 95\% confidence interval (dashed blue line) and corresponding point estimator (solid blue line).}
\label{fig:2}
\end{figure}

\section{Theoretical results}\label{sec:theory}

\subsection{Connection to the nonparametric maximum likelihood estimator}

\yifan{
 In Section~\ref{sec:theory}, we present the theoretical results in two directions. First, we show that the mode of the fiducial distribution is the NPMLE. We found this result surprising as fiducial distribution is not the same as normalized likelihood. 
 The other direction is asymptotic analysis of the case $K$ censoring where the number of inspection times $K$ goes to infinity, which will be studied in the next subsection.
 It is well known that counting process techniques that have been successfully used in asymptotic analysis of right-censored data cannot be used for fixed $K$ interval-censored data where theoretical results appear to be much harder and will not be studied here.
 }

\begin{proposition}\label{theorem:consistency}
\yifan{Assuming~\eqref{eqr:non_dependence}, for a given dataset $(\bl,\br)$ any $F$ maximizing fiducial probability $\text{Pr}^\star(F\in Q_{\bl,\br}(\bU^\star))$ is an NPMLE}.
\end{proposition}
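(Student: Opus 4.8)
The plan is to show that, for fixed data $(\bl,\br)$, the fiducial probability $\text{Pr}^\star\big(F\in Q_{\bl,\br}(\bU^\star)\big)$ equals the interval-censoring likelihood $\prod_{i=1}^n\{F(r_i)-F(l_i)\}$ up to a multiplicative constant that does not depend on $F$. Once this is established the claim is immediate, because an NPMLE is by definition a maximizer of this likelihood, so the maximizers of the fiducial probability and of the likelihood coincide. To make the reduction precise I would first write the fiducial probability as a ratio of Lebesgue volumes. Let $S=\{\bu\in[0,1]^n : Q_{\bl,\br}(\bu)\neq\emptyset\}$, which by \eqref{eq:constraint2} equals $\{\bu : r_i\leq l_j \Rightarrow u_i<u_j\}$, and let $B_F=\{\bu\in[0,1]^n : F(l_i)<u_i\leq F(r_i),\ i=1,\ldots,n\}$ be the box describing the event $F\in Q_{\bl,\br}(\bu)$. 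Since $\bU^\star$ is uniform on $S$,
\[
\text{Pr}^\star\big(F\in Q_{\bl,\br}(\bU^\star)\big)=\frac{\mathrm{Vol}(B_F\cap S)}{\mathrm{Vol}(S)}.
\]
Here $\mathrm{Vol}(S)>0$ because $S$ contains a nonempty open set (any $\bu$ whose coordinates are in general position and strictly increase along a linear extension of the order induced by $r_i\leq l_j$), so the denominator is a positive constant free of $F$.

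The crux is the containment $B_F\subseteq S$, which collapses the numerator to $\mathrm{Vol}(B_F)$. Indeed, take $\bu\in B_F$ and suppose $r_i\leq l_j$; by monotonicity of $F$ we then have $u_i\leq F(r_i)\leq F(l_j)<u_j$, so the defining constraint \eqref{eq:constraint2} of $S$ holds and $\bu\in S$. Hence $B_F\cap S=B_F$, and because $B_F$ is a product of coordinate intervals, $\mathrm{Vol}(B_F)=\prod_{i=1}^n\{F(r_i)-F(l_i)\}$, with an empty factor (arising when $F(r_i)=F(l_i)$) contributing $0$, exactly as the likelihood vanishes in that degenerate case.

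Combining the two facts yields $\text{Pr}^\star\big(F\in Q_{\bl,\br}(\bU^\star)\big)=\mathrm{Vol}(S)^{-1}\prod_{i=1}^n\{F(r_i)-F(l_i)\}$, so maximizing the fiducial probability over distribution functions $F$ is equivalent to maximizing the likelihood, and every maximizer is an NPMLE. I expect the containment $B_F\subseteq S$ to be the one genuinely non-routine step: it is precisely this automatic compatibility of the box constraints $F(l_i)<u_i\leq F(r_i)$ with the partial-order constraint \eqref{eq:constraint2} that makes the fiducial probability coincide with, rather than merely approximate, the likelihood. This is also what explains why the stated (and a priori surprising) identification of the fiducial mode with the NPMLE holds exactly, even though the fiducial distribution is not a normalized likelihood. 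A minor point to verify carefully is the edge handling of degenerate or tied intervals, which I would absorb through the empty-factor convention above.
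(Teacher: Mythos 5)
Your argument is correct and follows essentially the same route as the paper, which reduces the claim to the identity $\text{Pr}^\star(F\in Q_{\bl,\br}(\bU^\star))\propto\prod_{i=1}^n\{F(r_i)-F(l_i)\}$ and then invokes the definition of the NPMLE. In fact you supply the justification the paper only asserts: the containment $B_F\subseteq S$ (which is the sufficiency half of Lemma~\ref{lemma:equiv2}) is exactly why the intersection volume collapses to the product of interval lengths with the $F$-free normalizer $\mathrm{Vol}(S)^{-1}$.
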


Proposition~\ref{theorem:consistency} provides some justification for the fiducial approach. \yifan{Note that the $\text{Pr}^\star(F\in Q_{\bl,\br}(\bU^\star))$ is called {\em plausibility} in the Dempster-Shafer theory \citep{shafer1976mathematical}, so the result could be interpreted as maximum plausibility and maximum likelihood agree in this model.
This result suggests a possible way to create a simultaneous fiducial confidence interval as a ball of $1-\alpha$ fiducial probability with its center being the most plausible distribution function, the NPMLE. In practice, this would be constructed by selecting a ball that contains $(1-\alpha)$100\% fiducial samples from the Gibbs sampler of Section~\ref{sec:Gibbs}.}

 \subsection{\yifan{ Bernstein-von Mises theorem under Assumption~\ref{as:2}}}
 \yifan{
 Recall that the fiducial distribution is a data-dependent distribution 
which is defined for every fixed
dataset $(\bl,\br)$. It can be made into a random measure 
in the same way as one defines the
usual conditional distribution, i.e., by plugging random variables $(\bL,\bR)$ into the observed data. 
In this section, we study the asymptotic behavior of this random measure under the following condition:}
\begin{assumption}
$    n^2 \max_{i=1,\ldots,n} |R_i-L_i| \to 0$ in probability.
\label{as:2}
\end{assumption}

\yifan{
Assumption~\ref{as:2} provides a sufficient condition for $\sqrt n$-convergence and the Bernstein-von Mises theorem~\ref{main}, which needs the length of each interval to be short.
This assumption can also be viewed as a case $K$ censoring where the number of inspection times $K$ goes to infinity at a certain rate.
A similar but different asymptotic assumption for interval-censoring is Assumption (A1) in \cite{huang1999} which basically requires enough exact observations. 
Both assumptions essentially impose the restriction that the censored data are in some sense close to the uncensored data.
}

Next, we prove a central limit theorem  for $F^L(t)$. The same result holds for $F^U(t)$.
\begin{theorem}\label{main}
Suppose the true cumulative distribution function is absolutely continuous.
If Assumption~\ref{as:2} holds,
\begin{align}
n^{1/2}\{ F^L(\cdot)-\widehat F(\cdot)  \} \rightarrow B_F(\cdot),
\label{eq:gaussian}
\end{align}
in distribution on Skorokhod space $D[0,\tau]$ in probability, where $\widehat F(\cdot)$ is the empirical cumulative distribution function constructed based on \yifan{the unobserved failure times $T_i$, $\tau$ is the end of the follow-up time with} $F(\tau)<1$, and $B_F(\cdot)$ is a Gaussian process with  mean zero  and $\operatorname{cov}(B_F(s),B_F(t))=F(t\wedge s)-F(t)F(s)$.
\end{theorem}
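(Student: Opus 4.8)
The plan is to show that, under Assumption~\ref{as:2}, the combinatorial constraint set on which $\bU^\star$ is uniform collapses, with probability tending to one, to a single chain, after which $F^L$ becomes an explicit functional of the uniform order statistics and the limit follows from classical empirical process theory together with a random time change. First I would reduce the problem to the case of pairwise disjoint intervals. Transforming by the continuous true $F$, set $V_i=F(T_i)$, which are i.i.d.\ $\tunif(0,1)$; since the fiducial law of $\bU^\star$ depends on the data only through the order relations ``$r_i\le l_j$'', which are preserved by the monotone map $F$, I may work on the transformed scale, where Assumption~\ref{as:2} gives $\max_i\{F(R_i)-F(L_i)\}=o_p(n^{-2})$ (using boundedness of the density on $[0,\tau]$). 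The minimal spacing of $n$ i.i.d.\ uniforms satisfies $n^2\min_i(V_{(i+1)}-V_{(i)})=\Theta_p(1)$, so a two-event bound shows
\[
\Pr\!\Big(\,2\max_i|R_i-L_i|\ \ge\ \min_{i\ne j}|T_i-T_j|\,\Big)\longrightarrow 0 .
\]
On the complementary event $A_n$ (whose probability tends to $1$) every pair of intervals is disjoint, hence linearly ordered, so the constraint~\eqref{eq:constraint2} is exactly the chain induced by the ranks of the $T_i$. Consequently, on $A_n$ the fiducial vector $\bU^\star$ is distributed as the order statistics of $n$ i.i.d.\ uniforms placed in the rank order of the $T_i$.

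Next I would make $F^L$ explicit. On $A_n$ the intervals are ordered as the $T_i$, so $F^L(t)=\max\{U_i^\star:R_i\le t\}=U^\star_{(k_L(t))}$, where $k_L(t)=\#\{i:R_i\le t\}$ and $U^\star_{(\cdot)}$ are the uniform order statistics; set $k(t)=\#\{i:T_i\le t\}=n\widehat F(t)$. I would then decompose
\[
n^{1/2}\{F^L(t)-\widehat F(t)\}
= n^{1/2}\Big\{U^\star_{(k_L(t))}-\tfrac{k_L(t)}{n}\Big\}
+ n^{1/2}\,\tfrac{k_L(t)-k(t)}{n}.
\]
Because $T_i\le R_i\le T_i+\max_j|R_j-L_j|$ and the intervals are disjoint, $k(t)-k_L(t)=\#\{i:T_i\le t<R_i\}\in\{0,1\}$ for every $t$, so the second term is bounded by $n^{-1/2}$ uniformly and vanishes; this is where the $n^2$ rate is spent. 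It remains to treat the first term, which is the uniform quantile process read off at the data-determined time points $s=k_L(t)/n$.

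Finally I would invoke the quantile-process Donsker theorem: $\alpha_n(s):=n^{1/2}\{U^\star_{(\lceil ns\rceil)}-\lceil ns\rceil/n\}$ converges weakly in $D[0,1]$ to a standard Brownian bridge $B$, and this convergence is driven entirely by the $\bU^\star$-randomness and is therefore independent of the data. The map $t\mapsto k_L(t)/n$ is the empirical distribution function of the $R_i$; by Glivenko--Cantelli together with $R_i\approx T_i$ it converges uniformly to $F$ in probability. A continuous-mapping argument for weak convergence under a uniformly convergent random time change then gives $\alpha_n(k_L(\cdot)/n)\Rightarrow B\circ F=:B_F$ on $D[0,\tau]$, conditionally on the data, in probability. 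Since $\operatorname{cov}(B(F(s)),B(F(t)))=F(s)\wedge F(t)-F(s)F(t)=F(s\wedge t)-F(s)F(t)$, this identifies the limit and, combined with the negligibility of the remainder and $\Pr(A_n)\to 1$, yields~\eqref{eq:gaussian}.

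I expect the main obstacle to be the disjointness step and the precise bookkeeping that lets Assumption~\ref{as:2} convert the combinatorial constraint~\eqref{eq:constraint2} into a single chain with probability tending to one; once the order-statistics representation is available the remaining convergence is classical, although making the ``in distribution, in probability'' (Bernstein--von Mises) mode of convergence rigorous---that is, controlling the data-dependent time change uniformly while keeping the $\bU^\star$-driven Donsker limit---will require some care.
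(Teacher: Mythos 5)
Your proposal is correct, and its first half --- using Assumption~\ref{as:2} together with the order-$n^{-2}$ minimal spacing of the uniforms to show that, with probability tending to one, the intervals $(L_i,R_i]$ are pairwise disjoint, so that the constraint \eqref{eq:constraint2} collapses to a single chain and $\bU^\star$ becomes a vector of i.i.d.\ uniform order statistics arranged in the rank order of the $T_i$ --- is exactly the paper's argument. Where you genuinely diverge is the final step. The paper introduces the oracle process $\widetilde F(s)=U^*_{(n\widehat F(s))}$ indexed by the unobserved $T_{(i)}$, shows $\sup_s n^{1/2}|\widetilde F(s)-F^L(s)|\to 0$ by a union bound on the uniform spacings, and then imports the limit of $n^{1/2}(\widetilde F-\widehat F)$ from Theorem~2 of \cite{cuihannig2019} via Lemma~\ref{corollary}, obtaining the limit in the form $\{1-F(\cdot)\}B(\gamma(\cdot))$ with $\gamma=F/(1-F)$ and identifying the covariance afterwards. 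You instead index by the $R_i$, write $F^L(t)=U^\star_{(k_L(t))}$, note $|k_L(t)-n\widehat F(t)|\le 1$ on the disjointness event, and apply the classical uniform quantile-process Donsker theorem composed with the data-driven time change $k_L(\cdot)/n\to F$. The two routes carry the same content; yours is more elementary and self-contained (no appeal to the right-censored fiducial machinery) and identifies the limit directly as $B\circ F$, while the paper's reuses an existing result at the cost of verifying its hypotheses. One point you make explicit that the paper glosses over: converting $\max_i|R_i-L_i|=o_p(n^{-2})$ into separation of the $T_i$ (equivalently, into $\max_i\{F(R_i)-F(L_i)\}=o_p(n^{-2})$) requires the density of $F$ to be bounded on $[0,\tau]$, which is slightly stronger than the stated absolute continuity; this caveat applies equally to the published proof, so it is not a defect of your argument relative to the paper's.
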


The above theorem establishes a Bernstein-von Mises theorem for the fiducial distribution.
To understand this mode of convergence used here, note
that there are two sources of randomness present. One is from the fiducial distribution derived from each fixed dataset. The other is the usual randomness of the data. The
mode of convergence here is in distribution in probability, i.e., the centered and scaled fiducial
distribution viewed as a random probability measure on $D[0, \tau]$ converges in probability to the
distribution of the Gaussian process described in the right-hand side of Equation~\eqref{eq:gaussian}. \yifan{Mathematically speaking,
for all $\epsilon>0$, 
\[
 \text{Pr}(\rho[n^{1/2}\{ F^L(\cdot)-\widehat F(\cdot)\}, B_F(\cdot)]>\epsilon)\to 0,
\]
where $\rho$ is a metric on
the space of probability measures on $D[0, \tau]$ that metrizes weak topology, e.g.,  Dudley's metric \citep{Shorack2017}, and the probability refers to the randomness of the data $L_i,R_i,T_i$.} 

\yifan{
Assumption~~\ref{as:2} may be limited in certain applications.  That being said, we believe that Theorem~\ref{theorem:consistency} might hold more generally. In particular, we conjecture that under this interval-censoring setting whenever there is a $\sqrt n$-convergence of the NPMLE, there is a Bernstein-von Mises theorem for the fiducial distribution.
It would also be interesting to investigate the convergence rate and distributional result of the proposed fiducial distribution for the fixed case $K$ censoring. In general, we do not expect a $\sqrt n$-convergence rate because \cite{groeneboom1992information,groeneboom2008current} proved a cube rate convergence for the NPMLE for current status data and case II censoring.}

\section{Simulation experiments}\label{sec:simulation}

\subsection{Current status data}
\yifan{We examined the coverage and average length of 95\% fiducial confidence intervals for $F(t_0)$, where, following \cite{banerjee2005confidence,sen2007pseudolikelihood}, we select $t_0$ as the median of the failure distribution.}
We considered the following two scenarios from \cite{banerjee2005confidence}, where the first scenario was also considered in the unpublished longer version of \cite{sen2007pseudolikelihood}:

\yifan{Scenario 1: Let the event time $F$ follow $\texp (1)$ and the observation time follow $\texp (1)$. 
}

\yifan{Scenario 2: Let the event time $F$ follow $\tgamma (3,1)$ and the observation time follow $\tunif (0,5)$.}

We chose sample sizes $n= 50, 75, 100, 200, 500, 800, 1000$ following \cite{banerjee2005confidence}.
\yifan{Each scenario was simulated 1000 times.
The fiducial estimates were based on 1000 iterations after 100 burn-in times.}
 For both scenarios, \yifan{the interval [0,5]} was equally divided into 100 intervals as a fiducial grid, \yifan{where fiducial grid refers to the vector $\bt_{\text{grid}}$ defined in Algorithm~1}. The simulation results are listed in Table~1 for each scenario. 
The results of competing 95\% confidence intervals, such as the likelihood ratio-based method, maximum likelihood based method with nonparametric estimation, subsampling-based method, and parametric (Weibull-based) estimation, can be found in \cite{banerjee2005confidence}.

In the tables, \yifan{LR} denotes the error rate that the true parameter is less than the lower confidence limit; \yifan{UR} denotes the error rate that the true parameter is greater than the upper confidence limit.
The two-sided error rate is obtained by adding the values in columns \yifan{LR} and \yifan{UR}. Values less than 2.5\% in individual columns, 5\% in aggregate, indicate good performance. \yifan{WD} is the average width of the confidence interval.
As can be seen from these tables, the proposed fiducial confidence intervals maintain the aggregate coverage and are much shorter than those considered in \cite{banerjee2005confidence}.
Recall that Tables 1-2 in \cite{banerjee2005confidence} show all considered methods have either substantial or minor coverage problems in these settings.

\begin{table}[H]
\begin{center}
{\textbf{Table~1}. Error rates in percent and average width of $95\%$ confidence intervals for $F(t_0)$.}\\
\label{table1}
\begin{tabular}{cccccccccc}
         & \multicolumn{3}{c}{Scenario 1} & \multicolumn{3}{c}{Scenario 2}\\
         & \yifan{LR}      & \yifan{UR}     & \yifan{WD}      & \yifan{LR}      & \yifan{UR}     &   \yifan{WD}      \\
$n$=50  & 1.1    & 2.4   & 0.414   & 1.5    & 3.6   & 0.429   \\
$n$=75   & 1.0    & 1.6   & 0.364    & 1.1    & 2.3   & 0.382    \\
$n$=100   & 1.0    & 1.4   & 0.332   & 0.6    & 3.1   & 0.351    \\
$n$=200   & 1.3    & 1.8   & 0.262   & 0.6    & 1.4   & 0.280   \\
$n$=500   & 0.7     &  1.2   &  0.188  &  1.1  & 0.7  & 0.205   \\
$n$=800   &  0.9  & 1.7  &  0.159 &  0.7   &  1.1  &   0.174 \\
$n$=1000  &  1.6  & 0.9  &   0.146 &  0.7  & 1.1  &  0.159 \\
\end{tabular}\\
\small{LR denotes the error rate that the true parameter is less than the lower confidence limit; UR denotes the error rate that the true parameter is greater than the upper confidence limit; WD is the average width of the confidence interval. The results of prior methods can be found in Tables 1-2 of \cite{banerjee2005confidence}.}
\end{center}
\end{table}

\subsection{Case II and mixed case censoring}

We considered the following two scenarios from \cite{sen2007pseudolikelihood} and their unpublished longer version:

\yifan{Scenario 3 (case II censoring): Let $F$ follow a $\tgamma(2,1)$ distribution, and the first observation time \yifan{$C_1$} is taken to be $\tunif(0,2)$ and the second observation time $C_2$ is taken as $C_1 + 0.5 + \widetilde C_1$, with $\widetilde C_1$ independent of $C_1$ and also following $\tunif(0,2)$. Recall that we take $t_0$ to be the median of the failure time.
}

\yifan{Scenario 4 (mixed case censoring): The event time distribution $F$ is taken to follow $\texp(1)$. The random number of observation times for an individual $K$ is generated from the discrete uniform distribution on the integers $\{1, 2, 3, 4\}$, and, given $K = k$, the observation times $\{C_{i}\}_{i=1}^k$ are chosen as $k$ order statistics from a $\tunif(0,3)$ distribution.} 

Again, we chose sample sizes $n= 50, 75, 100, 200, 500, 800, 1000$.
\yifan{Each scenario was simulated 1000 times.
The fiducial estimates were based on 1000 iterations after 100 burn-in times.}
 For Scenario 3, \yifan{the interval} [0,5] was equally divided into 100 intervals as a fiducial grid, and \yifan{the interval} [0,3] was equally divided into 100 intervals as a fiducial grid for Scenario 4.
 Again, we examined the coverage and average length of the 95\% fiducial confidence intervals for $F(t_0)$. The results of the 95\% confidence intervals for the 
 pseudo-likelihood ratio method, maximum pseudo-likelihood method,
 kernel-based method, and subsampling-based method were reported in \cite{sen2007pseudolikelihood} and their unpublished longer version.

The simulation results are shown in Table~2 for each scenario. Again, we see that the proposed fiducial confidence intervals maintain the aggregate coverage and are much shorter than those considered in \cite{sen2007pseudolikelihood}. Recall that Tables 2-3 in \cite{sen2007pseudolikelihood} show all considered methods have coverage problems in these settings except for the subsampling-based method.

\begin{table}[H]
\begin{center}
{\textbf{Table~2}. Error rates in percent and average width of $95\%$ confidence intervals for $F(t_0)$.}\\
\begin{tabular}{cccccccccc}
         & \multicolumn{3}{c}{Scenario 3} & \multicolumn{3}{c}{Scenario 4} \\
         & \yifan{LR}      & \yifan{UR}     & \yifan{WD}      & \yifan{LR}      & \yifan{UR}     & \yifan{WD}      \\
$n$=50   & 2.1    & 2.9   & 0.324   & 1.0    & 4.0   & 0.373    \\
$n$=75   & 1.4    & 1.3   & 0.280   & 1.0    & 3.7   & 0.323   \\
$n$=100   & 1.6    & 1.3   & 0.252   & 1.4    & 1.6   & 0.291   \\
$n$=200   & 1.5    & 0.9   & 0.193  & 1.2    & 2.8   & 0.224 \\
$n$=500   & 1.9  &  1.1  &  0.133 &  1.7  &  2.9  & 0.155 \\
$n$=800   &  1.9  &  2.1 &  0.111 &   0.7 &   2.0  &   0.129 \\
$n$=1000   & 2.1 & 2.1 & 0.101 & 1.2  & 1.8  & 0.117 \\
\end{tabular}\\
\small{LR denotes the error rate that the true parameter is less than the lower confidence limit; UR denotes the error rate that the true parameter is greater than the upper confidence limit; WD is the average width of the confidence interval. The results of prior methods can be found in Tables 2-3 in the unpublished longer version of \cite{sen2007pseudolikelihood}.}
\end{center}
\end{table}

\subsection{Mean squared error of the point estimators}
In this section, we evaluate the mean squared error of the proposed fiducial point estimator of $F(t_0)$ for the above four scenarios. 
Furthermore, we compare it with the NPMLE estimator implemented in \cite{interval}. 
The default values of the parameters in the function \texttt{interval::icfit} are used. Moreover, the NPMLE estimator is not uniquely defined.
If there is not a unique NPMLE for a specific time, then we consider the following choices specified in \texttt{interval::getsurv}.

\begin{itemize}
\item Interpolation: take the
point on the line connecting the two points bounding the non-unique NPMLE interval;

\item Left: take the left side of the non-unique NPMLE interval (smallest $S(t)$, largest $F(t)$);

\item Right: take the right side of the
non-unique NPMLE interval (largest $S(t)$, smallest $F(t)$).
\end{itemize}

\begin{table}[H]
\begin{center}
{\textbf{Table~3}. Mean squared error ($\times 10^{-4}$) of point estimators for $F(t_0)$.}\\
\begin{tabular}{cccccccccccc}
         & \multicolumn{4}{c}{Scenario 1} & \multicolumn{4}{c}{Scenario 2} \\
         & F & MLE-I      & MLE-L     & MLE-R   & F & MLE-I      & MLE-L     & MLE-R      \\
$n$=50   &  103   &  225  &  236  & 246  &   116 & 272  &  295  &  281   \\
$n$=75   &   69  &   158   &  163   &  160  &  83 &  189  & 198  & 191 \\
$n$=100   &  56   &  127  &  132  &  133   & 69  & 152  &   158 & 156  \\
$n$=200   &  36   &  88  & 89  &  89   &  39  & 95 &  97  & 97 \\
$n$=500   &  17  &  44  & 45  &  45  &  19  &  50  &  50  & 50  \\
$n$=800   &   12  & 31  &  32 &  32  & 14 & 36  & 37  &  37 \\
$n$=1000   &  10 &   27   &  27 &  27 & 11 & 29  &  29  & 29  \\
\end{tabular}\\
\small{F denotes the proposed fiducial point estimator; MLE-I, MLE-L, and MLE-R denote the NPMLE with three specifications ``interpolation'', ``left'', and ``right'', respectively.}
\end{center}
\end{table}

\begin{table}[H]
\begin{center}
{\textbf{Table~4}. Mean squared error ($\times 10^{-4}$) of point estimators for $F(t_0)$.}\\
\begin{tabular}{cccccccccccc}
         & \multicolumn{4}{c}{Scenario 3} & \multicolumn{4}{c}{Scenario 4} \\
         & F & MLE-I      & MLE-L     & MLE-R   & F & MLE-I      & MLE-L     & MLE-R      \\
$n$=50   &   67  &  137  &  143  & 137  & 90  &  175 &   190  &   183  \\
$n$=75   &  42   &  86 &   88  &  89   &  66  & 124  & 129   & 129   \\
$n$=100   &  33   &   70 &  71  &  72   &  47  & 92  &  98  &  95 \\
$n$=200   &   20  &  45  &   46 &  46   &  28  &   56 &  58  &  57 \\
$n$=500   &  10  &  23  &  23  &  24  &  14  &  29  &  29   & 29  \\
$n$=800   &  7  & 16 & 16 &  16 &  9  &  21  & 21  &  21  \\
$n$=1000   &   6  &   14   & 14 & 14  & 7 & 15 &  15 & 15  \\
\end{tabular}\\
\small{F denotes the proposed fiducial point estimator; MLE-I, MLE-L, and MLE-R denote the NPMLE with three specifications ``interpolation'', ``left'', and ``right'', respectively.}
\end{center}
\end{table}

The mean squared errors are presented in Tables~3 and 4 for each scenario.
As sample size increases, all methods have higher estimation accuracy.
In addition, we see that the proposed fiducial approach has the smallest mean squared errors. Moreover, the mean squared errors of the NPMLE are twice as large as the fiducial estimator.
The observed patterns are consistent across all four scenarios and different sample sizes.

\section{Real data application}\label{sec:realdata}

\subsection{Current status data}

We consider a dataset on the prevalence of rubella in 230 Austrian males older than three months for whom
the exact date of birth was known \citep{keiding1996estimation}. Each individual was tested at the Institute of Virology, Vienna
during March 1-25, 1988, for immunization against Rubella. The
Austrian vaccination policy against Rubella at the time had long been to routinely
immunize girls just before puberty but not to vaccinate the boys, so that the male Austrians can represent an unvaccinated population.

Similar to \cite{keiding1996estimation,banerjee2005confidence}, our goal is to estimate the distribution of the time to infection (and subsequent
immunization) with rubella in the male population. It is assumed that immunization once
achieved, is lifelong.
\cite{keiding1996estimation,banerjee2005confidence} analyzed these data using the current status model. We apply the proposed fiducial approach to this dataset
with the range of observed times equally divided into 100 intervals as a fiducial grid. The fiducial estimates were based on 1000 iterations after 100 burn-in times.

As can be seen from Figure~\ref{pic:real1}, the estimated distribution function is similar to that of the NPMLE, as shown in Figure~1 of \cite{banerjee2005confidence}. The distribution function seems to rise steeply in the age range from 0 to 20 years. There is no significant change beyond 30 years, indicating that almost all individuals were immunized in their youth. 

\yifan{The shape of our 95\% confidence interval is similar to the likelihood ratio-based confidence intervals as presented in Figure~1 of  \cite{banerjee2005confidence}.
Figure~2 of \cite{banerjee2005confidence} shows the lengths of the confidence intervals, as a function of $t$ for the likelihood ratio-based confidence interval, parametric maximum likelihood based interval, non-parametric maximum likelihood based interval, and subsampling-based method. As stated in \cite{banerjee2005confidence}, ``none of the methods can be expected to come up with the shortest intervals in any given situation.''
However, the maximum length, taken over all $t$, of the proposed fiducial confidence intervals is 0.329. This appears to be much shorter than the maximum lengths reported in Figure~2 of \cite{banerjee2005confidence}.
}

\begin{figure}[h]
\begin{center}
\includegraphics[width=6.5cm]{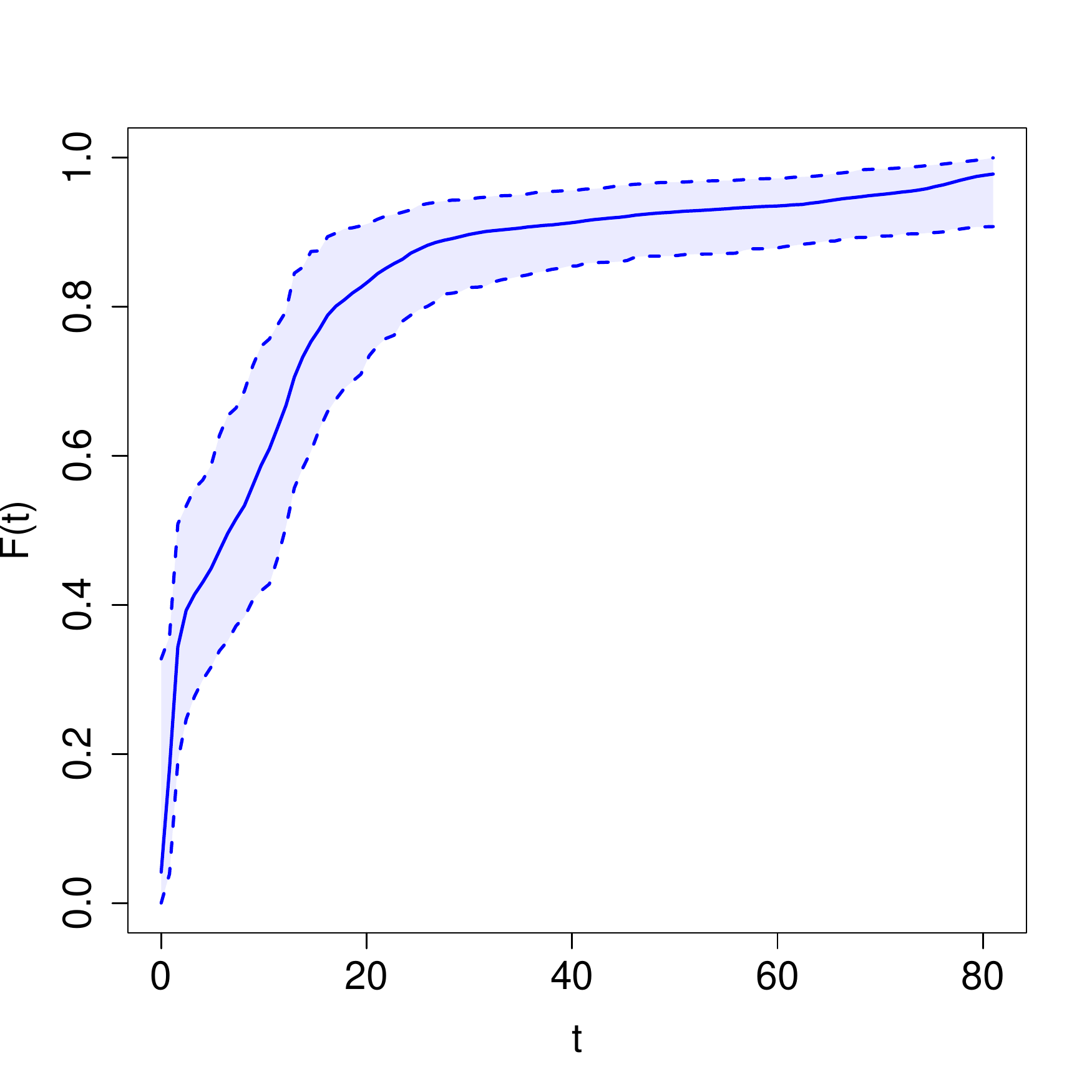}
\end{center}
\caption{Austrian rubella data: the estimated cumulative distribution function (solid line) and 95\% confidence interval (dashed lines) for $F(t)$.
}
\label{pic:real1}
\end{figure}

\subsection{Mixed case censoring data}
In this section, we consider a classic dataset given in \cite{de1989analysis} of a cohort study of haemophiliacs that were at risk of infection
with HIV.
 Since 1978, 262 people with hemophilia have been treated at Hopital Kremlin Bicetre and Hopital Coeur des Yvelines in France.
 The data consist of two groups:
105 patients in the heavily treated group, that is in the group of patients who received
at least 1000 $\mu g/kg$ of blood factor for at least one year between 1982 and 1985, and 157 patients in the lightly treated group, corresponding to those patients who received less than 1000 $\mu g/kg$ per year.
  By August 1988, 197 had  become infected, 97 in the heavily treated group and 100 in the lightly treated group, and 43 of these had developed clinical symptoms relating to their HIV infection. All of the infected persons are believed to have become infected by the contaminated blood factor they received for their hemophilia.

We are interested in estimating the distribution of time to HIV infection $T$ ($T = 1$ denotes July 1, 1978). The observations are based on a discretization of the time axis into six-month intervals. For each patient, the only information available is that $T \in (L,R]$.
We apply the proposed fiducial approach separately to the two different groups with the range of observed times equally divided into 100 intervals as a fiducial grid. The fiducial estimates were based on 1000 iterations after 100 burn-in times.

Due to the lack of information about the other inspection times, the full mixed case model cannot be fitted to the data.
\cite{sen2007pseudolikelihood} formulated the problem as a case II censoring model, which is a simplification for the purpose of illustrating their method; while for the proposed fiducial approach, we do not necessarily treat the data as a case II censoring problem due to the nature of our unified approach.

\begin{figure}[h]
\begin{center}
\includegraphics[width=6.5cm]{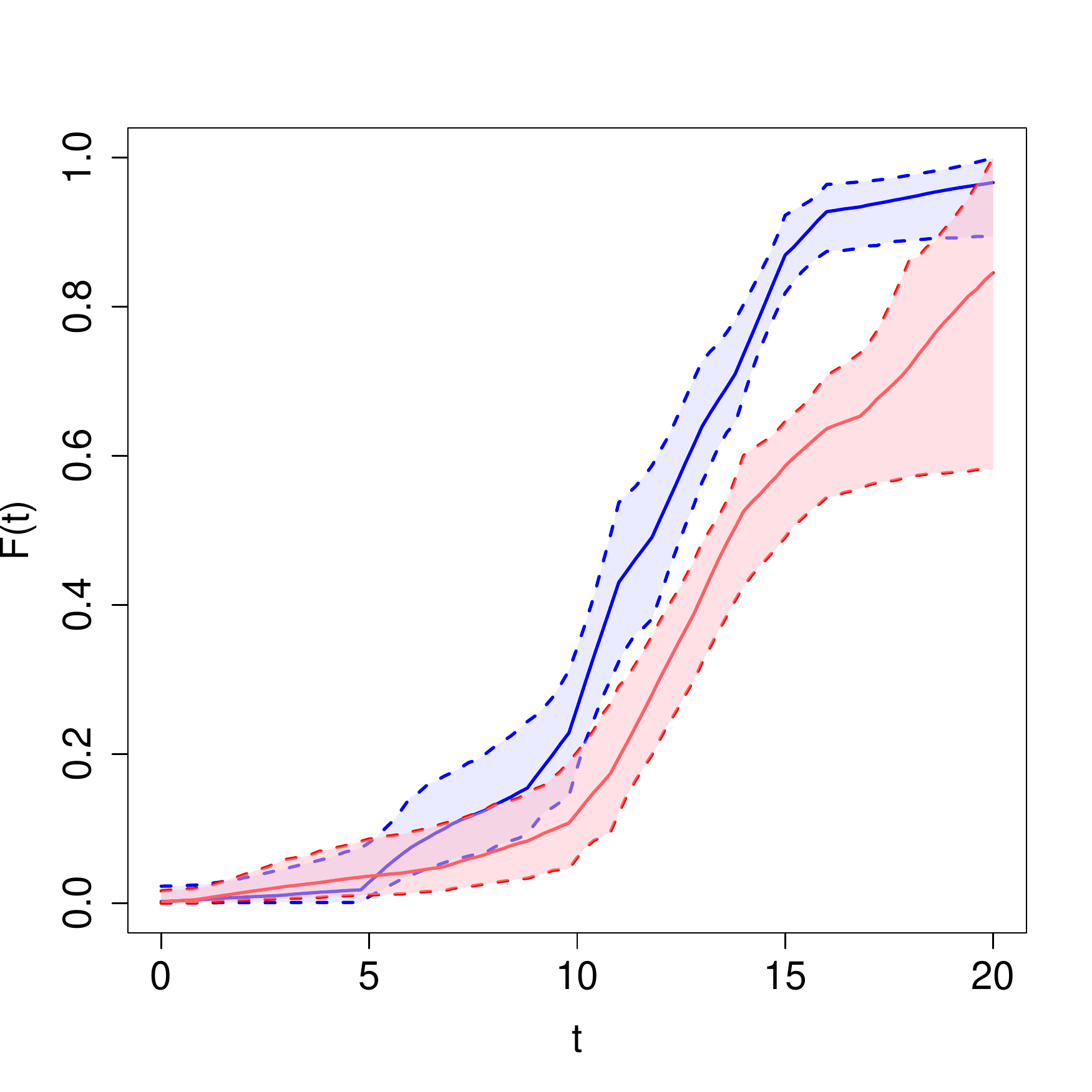}
\end{center}
\caption{A cohort study of hemophiliacs infected with HIV: the estimated cumulative distribution functions (solid lines) and 95\% confidence intervals (dashed lines) for $F(t)$  of two groups, respectively. The blue curves correspond to the heavily treated group; the red curves correspond to the lightly treated group. \yifan{Here, time is measured in 6-month intervals, with $t=1$ denoting July 1, 1978.}
}
\label{pic:real2}
\end{figure}

\yifan{In Figure~\ref{pic:real2} we see a sharp rise in the frequency of infections beginning around $t=9$, with
 infections for the heavily treated group occurring somewhat sooner. Such a difference
 in the two distributions has biological plausibility because heavily treated patients are likely
 to have received greater concentrations of HIV \citep{de1989analysis}.}
As can be seen from our Figure~\ref{pic:real2} as well as Figure~1 and Table~2 in \cite{sen2007pseudolikelihood}, although the overall trends across the two groups are similar among various methods, our results differ slightly from \cite{sen2007pseudolikelihood} in the range $(14, 16)$. The distribution function for the heavily treated group dominates the lightly treated group from day 6, while \cite{sen2007pseudolikelihood} found that, between 14 and 16, the distribution function for the lightly treated group is higher.
\yifan{Our findings are more in line with a nonparametric Bayesian approach \citep{calle2001nonparametric} as well as a self-consistency algorithm \citep{gomez1994estimation}}.

\newpage 
\appendix

\numberwithin{equation}{section}

\begin{center}
\textbf{\Large Appendix}
\end{center}

\section{Lemma~\ref{lemma:equiv2} and its proof}\label{sec:A}

\begin{lemma}
$Q_{\bl,\br}(\bu)\neq\emptyset$ if and only if $\bu$ satisfy: whenever $r_i\leq l_j$ then $u_i< u_j$.
\label{lemma:equiv2}
\end{lemma}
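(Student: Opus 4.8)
The plan is to prove the two implications separately, the forward (necessity) direction being immediate and the reverse (sufficiency) direction carrying the real content. Throughout I work with $\bu\in(0,1)^n$, which is the relevant range since $\bU^\star$ is drawn from a uniform distribution; note that $u_j=0$ would make the upper constraint $F(l_j)<u_j$ infeasible, so the strict positivity of the $u_j$ is quietly used.

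For necessity I would take any $F\in Q_{\bl,\br}(\bu)$ and any pair $i,j$ with $r_i\le l_j$. Chaining the defining inequalities with the monotonicity of $F$ gives $u_i\le F(r_i)\le F(l_j)<u_j$, where the middle step uses $r_i\le l_j$, so $u_i<u_j$. This is a one-line argument and needs nothing beyond $F$ being a nondecreasing function.

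For sufficiency I would assume the combinatorial condition and exhibit an explicit witness, namely the lower fiducial bound $F^L(t)=\max\{u_i:\ r_i\le t\}$ with $\max\emptyset=0$. First I would check that $F^L$ is, after a harmless modification at the right tail, a genuine cumulative distribution function: it is a finite maximum of the nondecreasing right-continuous functions $u_i\,\mOne[r_i\le t]$, hence itself nondecreasing and right-continuous, with $F^L(t)\to 0$ as $t\to-\infty$; since $F^L$ may level off at $\max_i u_i<1$, I would redefine it to equal $1$ beyond all finite $l_i$ and $r_i$, which preserves monotonicity and right-continuity and alters no constraint attached to a finite $l_i$ or $r_i$. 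Then I would verify membership in $Q_{\bl,\br}(\bu)$: the lower constraints hold because $r_j\le r_j$ forces $F^L(r_j)\ge u_j$; the upper constraints hold because $F^L(l_j)=\max\{u_i:\ r_i\le l_j\}$ is a maximum over indices $i$ each satisfying $r_i\le l_j$, so by hypothesis each such $u_i<u_j$ and the finite maximum is therefore $<u_j$, while if that index set is empty the value is $0<u_j$.

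The care-point, and what I expect to be the only genuine obstacle, is the asymmetry between the strict upper inequality $F(l_i)<u_i$ and the non-strict lower inequality $u_i\le F(r_i)$: a pointwise feasibility check at each $t$ is not enough, one needs a single monotone function meeting all constraints at once. Choosing $F^L$ resolves this cleanly, because $F^L$ meets the lower constraints with equality yet stays \emph{strictly} below $u_j$ at each $l_j$ precisely by the strict combinatorial hypothesis. I would close by checking the censoring edge cases listed in Section~\ref{sec:method} --- $r_i=\infty$ gives $F(r_i)=1\ge u_i$ automatically, and a left-censored $l_i=0$ gives $F^L(l_i)=0<u_i$ --- so that the construction is valid across all the mechanisms (i)--(vi).
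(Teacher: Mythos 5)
Your proof is correct, but it takes a genuinely different route from the paper's for the substantive direction. The easy implication ($Q_{\bl,\br}(\bu)\neq\emptyset$ implies the ordering condition, via $u_i\le F(r_i)\le F(l_j)<u_j$) is identical in both. For the converse, however, the paper argues by contradiction and simply asserts that if $Q_{\bl,\br}(\bu)=\emptyset$ there must exist indices $i,j$ with $(l_j,r_j]$ lying entirely to the right of $(l_i,r_i]$ and $u_i\ge u_j$ --- which is precisely the contrapositive of what needs to be shown, with no justification supplied. Your constructive argument fills exactly this gap: exhibiting the lower fiducial bound $F^L(t)=\max\{u_i:\ r_i\le t\}$ as an explicit witness, and observing that it meets the non-strict lower constraints with equality while the strict ordering hypothesis keeps it strictly below each $u_j$ at $l_j$, is the content the paper's proof elides. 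Your identification of the strict/non-strict asymmetry as the real obstacle, and your handling of the edge cases $r_i=\infty$, $l_i=0$, and $u_j>0$, are all sound. This approach also buys a little extra: it shows that the set, when nonempty, contains a canonical element closely tied to the quantity $F^L$ that the paper uses later.

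One cosmetic repair: as literally stated, redefining the function to equal $1$ strictly beyond the largest finite endpoint $M$ leaves $F(M)=F^L(M)<1$ with right limit $1$, violating right-continuity at $M$. Set the function to $1$ only from some point strictly greater than $M$ onward (or interpolate continuously up to $1$ on an interval to the right of $M$); this changes no constraint and restores a genuine cumulative distribution function. With that one-line fix the proof is complete.
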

\begin{proof}
Sufficiency: If $Q_{\bl,\br}(\bu)\neq\emptyset$ holds, and $r_i\leq l_j$, then we know that $ u_i \leq F(r_i)\leq F(l_j)< u_j $.

Necessity: We prove this by contradiction. If $Q_{\bl,\br}(\bu)$ is empty, then there must exist indices $i$ and $j$ such that, $(l_j,r_j]$ is strictly larger than $(l_i,r_i]$ but $u_i \geq u_j$. This contradicts with whenever $r_i\leq l_j$ then $u_i< u_j$.
\end{proof}

\section{Proof of Proposition~\ref{theorem:consistency}}
\begin{proof}
\yifan{Let us first consider the current status data, $K=1$.} 
We adopt the notation of \cite{groeneboom1992information}, and denote $\delta_i=1$ if $l_i=0$ and $\delta_i=0$ if $r_i=\infty$.
Thus, the observed data $\{I_i=(l_i,r_i], i=1,\ldots, n\}$ can be recorded as $\{(x_i,\delta_i), i=1,\ldots,n\}$, where $x_i= r_i$ if $\delta_i=1$ and $x_i= l_i$ if $\delta_i=0$.
We have that the fiducial plausibility
\begin{align*}
\text{Pr}^\star(F\in Q_{\bl,\br}(\bU^\star)) & \yifan{\propto} \prod_{i=1}^n F(x_i)^{\delta_i}(1-F(x_i))^{1-\delta_i}\\
& = \exp\left\{ \sum_i \left[\delta_i \log(F(x_i)) + (1-\delta_i) \log(1-F(x_i)) \right ]\right\},
\end{align*}
where $\text{Pr}^\star$ denotes the fiducial probability.
Recall that the NPMLE solves the following optimization problem,
\begin{align*}
\max_{0\leq y_1,\ldots,y_n\leq 1} \sum_i \left[\delta_{i} \log(y_i) + (1-\delta_{i}) \log(1-y_i) \right ].
\end{align*}
\yifan{A detailed derivation of the closed form NPMLE can be found in \cite{sun2007statistical}.}
Thus, maximizing fiducial probability is equivalent to solving the optimization problem of the NPMLE estimator.

\yifan{In general, we have
\begin{align*}
\text{Pr}^\star(F\in Q_{\bl,\br}(\bU^\star))  \yifan{\propto} \prod_{i=1}^n [F(r_i)-F(l_i)],
\end{align*}
where $\text{Pr}^\star$ denotes the fiducial probability.
Therefore, any $F$ that maximizes the fiducial probability $\text{Pr}^\star(F\in Q_{\bl,\br}(\bU^\star))$ is an NPMLE estimator. 
Different algorithms for the NPMLE such as the self-consistency algorithm \citep{efron1967two,turnbull1976empirical,dempster1977maximum}, the iterative convex minorant algorithm \citep{groeneboom1992information,jongbloed1998iterative}, and a hybrid of self-consistency and iterative convex minorant algorithm \citep{wellner1997hybrid}, can be found in \cite{sun2007statistical}.}
\end{proof}

\section{Proof of Theorem~\ref{main}}\label{sec:D}
\begin{proof}
We need to study the distribution of $ Q_{\bl,\br}(\bU^\star)$
where $\bU^\star$ follows uniform distribution on the set
$\{\bu^\star: Q_{\bl,\br}(\bu^\star)\neq\emptyset\}$. Recall that we have $T_i=F^{-1}(U_i)$ and $L_i < F^{-1}(U_i) \leq R_i$  from Section~\ref{sec:method}.
Given Assumption~\ref{as:2}, we have
 \begin{equation}\label{eq:length1}
    n^2 \max_{i=1,\ldots,n} |r_i-l_i| \to 0 ~\text{in probability.}
 \end{equation}

We shall see that the unobserved $T_i=F^{-1}(U_i)$ are well separated.
Straightforward calculation with uniform order statistics shows that
\begin{align}\label{eq:length2}
\Pr\left(\min_{i\in\{0,\ldots,n\}} \left \{U_{(i+1)}-U_{(i)}\right\} >\frac{t}{n(n+1)}\right)\geq \left(1-\frac{t}{n}\right)^{n},
\end{align}
where $U_{(0)} \equiv 0$ and $U_{(n+1)} \equiv 1$. 
Equations~\eqref{eq:length1} and \eqref{eq:length2} together imply that
\[\Pr((l_i,r_i)\cap (l_j,r_j)\neq\emptyset \mbox{ ~for some~ $i\neq j$})\to 0.\]

\yifan{
Define
\begin{equation}\label{eq:OracleFidL}
\widetilde F(s)\equiv \sum_{i=0}^n I[
T_{(i)} \leq s< T_{(i+1)}]  U^*_{(i)}.
\end{equation}
Thus, we have that
\begin{align*}
 \text{Pr}(\sup_s n^{1/2} \{\widetilde F(s)- F^L(s) \}>\epsilon ) 
\leq & \sum_{i=0}^n \text{Pr}(U^*_{(i+1)}-U^*_{(i)} > \frac{\epsilon}{n^{1/2}})  \\ 
= & (n+1) \times \text{Pr}(Beta(1,n)>\frac{\epsilon}{n^{1/2}} )\\
= & (n+1) \times (1-\frac{\epsilon}{n^{1/2}})^n
\rightarrow  0.
\end{align*}
Therefore, \begin{align*}
\sup_s n^{1/2} |\widetilde F(s)- F^L(s) | \rightarrow 0,
\end{align*}
in probability. 
By Lemma~\ref{corollary}, we have that}
\begin{align*}
n^{1/2}\{ F^L(\cdot)-\widehat F(\cdot)  \} \rightarrow \{1-F(\cdot)\} W(\gamma(\cdot)),
\end{align*}
in distribution on Skorokhod space $D[0,\tau]$ in probability with $F(\tau)<1$,
where $W$ is the Brownian motion. Thus, for any $t<s$,
$$\operatorname{cov}[\{1-F(s)\}W(\gamma(s)),\{1-F(t)\} W(\gamma(t))]=\gamma(t)\{1-F(s)\}\{1-F(t)\}=F(t)\{1-F(s)\},$$ which completes the proof.

\end{proof}

\yifan{
\section{Lemma~\ref{corollary} and its proof}
\begin{lemma}\label{corollary}
Assume the conditions of Theorem~\ref{main}. We have 
\[
n^{1/2}\{ \widetilde F(\cdot)-\widehat F(\cdot)  \} \rightarrow \{1-F(\cdot)\} B(\gamma(\cdot)),
\]
in distribution on Skorokhod space $D[0,1]$ in probability, where $B$ is the Brownian motion, $\gamma(t)=\int_0^t \frac{f(s)}{[1-F(s)]^2} ds= \frac{F(t)}{1-F(t)}$, $\widehat F$ is defined in Theorem~\ref{main} as
 \begin{equation}\label{eq:OracleECDF}
\widehat F(s) \equiv \frac{1}{n}\sum_{i=1}^n I[T_i\leq s],
\end{equation}
and $\widetilde F$ is defined in Section~\ref{sec:D} as
\begin{equation*}
\widetilde F(s)\equiv \sum_{i=0}^n I[
T_{(i)} \leq s< T_{(i+1)}]  U^*_{(i)}.
\end{equation*}
\end{lemma}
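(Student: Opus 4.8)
The plan is to reduce the difference $\widetilde F-\widehat F$ to the classical uniform quantile process and then transport it to the stated limit by a deterministic time change. First I would invoke the disjointness already established in the proof of Theorem~\ref{main}: by Equations~\eqref{eq:length1} and \eqref{eq:length2}, with probability tending to one the observed intervals $(l_i,r_i]$ are pairwise disjoint, so the ordering constraint \eqref{eq:constraint2} defining the support of $\bU^\star$ collapses to the single total order induced by the ordering of the $T_i$. On this event the conditional law, given the data, of the sorted coordinates $U^*_{(1)}<\cdots<U^*_{(n)}$ is exactly that of the order statistics of $n$ i.i.d.\ $\tunif(0,1)$ variables, and in particular does not depend on the data.

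Next I would rewrite the target process. Since $\widehat F(s)=i/n$ on $[T_{(i)},T_{(i+1)})$, definition~\eqref{eq:OracleFidL} gives $\widetilde F(s)-\widehat F(s)=U^*_{(i)}-i/n$ there, so that
\[
n^{1/2}\{\widetilde F(s)-\widehat F(s)\}=q_n(\widehat F(s)),\qquad q_n(t):=n^{1/2}\{U^*_{(\lceil nt\rceil)}-\lceil nt\rceil/n\}.
\]
Here $q_n$ is the uniform quantile process, which converges weakly on $D[0,1]$ to a Brownian bridge $B^0$; crucially this holds conditionally on the data with a data-free limit, precisely because the conditional law of $(U^*_{(i)})_i$ was identified above as pure uniform order statistics.

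Then I would pass the time change through the limit. By Glivenko--Cantelli $\|\widehat F-F\|_\infty\to 0$ in probability, and $F$ is continuous by the absolute-continuity assumption; combining the conditional weak convergence $q_n\Rightarrow B^0$ with this uniform approximation and the asymptotic equicontinuity of $q_n$ yields $q_n(\widehat F(\cdot))\Rightarrow B^0(F(\cdot))$ in the ``in distribution in probability'' sense. Finally I would rewrite the bridge through the standard representation $B^0(t)=(1-t)\,W\!\big(t/(1-t)\big)$ with $W$ a Brownian motion, so that $B^0(F(s))=\{1-F(s)\}\,W(\gamma(s))$ with $\gamma(s)=F(s)/\{1-F(s)\}$; a direct covariance computation confirms $\operatorname{cov}(B^0(F(s)),B^0(F(t)))=F(s\wedge t)-F(s)F(t)$, matching the claim and identifying $B$ with $W$.

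The main obstacle is the mode of convergence together with the random time change acting simultaneously: one must show that the conditional law of $q_n(\widehat F(\cdot))$ given the data converges in probability, in a metric for weak convergence, to the law of $B^0(F(\cdot))$. The clean way is to split $q_n(\widehat F(\cdot))=q_n(F(\cdot))+\{q_n(\widehat F(\cdot))-q_n(F(\cdot))\}$: the first term converges conditionally to $B^0(F(\cdot))$ because $F$ is a fixed continuous time change, while the second term is controlled by the modulus of continuity of $q_n$ evaluated at the scale $\|\widehat F-F\|_\infty\to 0$, for which the asymptotic equicontinuity of the uniform quantile process is exactly the ingredient required.
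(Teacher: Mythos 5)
Your argument is correct, but it takes a genuinely different route from the paper. The paper's proof is a one-paragraph reduction to Theorem~2 of \cite{cuihannig2019}: it verifies that Assumptions~1--3 of that (right-censoring) paper hold here, with their $\pi(t)=1-F(t)$, absolute continuity of $F$ supplying their Assumption~2, and a convergence-of-integrals condition supplying their Assumption~3; the limit $\{1-F(\cdot)\}B(\gamma(\cdot))$ is then inherited wholesale from that theorem. You instead give a self-contained argument: you use the interval-disjointness already derived from \eqref{eq:length1}--\eqref{eq:length2} to identify the conditional law of the sorted fiducial sample $U^*_{(1)}<\cdots<U^*_{(n)}$ as exactly that of uniform order statistics (hence data-free, since the constraint \eqref{eq:constraint2} collapses to a single total order), observe that $n^{1/2}\{\widetilde F-\widehat F\}$ is the uniform quantile process composed with the random time change $\widehat F$, invoke the classical weak convergence of the quantile process to a Brownian bridge, handle the data-dependent time change by splitting off $q_n(\widehat F(\cdot))-q_n(F(\cdot))$ and controlling it via asymptotic equicontinuity plus Glivenko--Cantelli, and recover the stated form through $B^0(t)=(1-t)W(t/(1-t))$, whose covariance $F(s\wedge t)-F(s)F(t)$ matches the claim. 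Your route buys transparency and self-containment: it makes explicit where Assumption~\ref{as:2} enters (only through disjointness of the intervals), exposes the fiducial sample as nothing more than uniform order statistics once the intervals separate, and correctly isolates the genuinely delicate step --- conditional weak convergence under a random time change --- which the paper's citation leaves implicit; the paper's route buys brevity and uniformity with the machinery of \cite{cuihannig2019}. Two small points to tidy: the result should be stated on $D[0,\tau]$ with $F(\tau)<1$ as in Theorem~\ref{main} rather than on $D[0,1]$, since $\gamma$ degenerates as $F\to1$; and the identification of the conditional law as uniform order statistics holds only on the event that the intervals are pairwise disjoint, which suffices because that event has probability tending to one and the asserted convergence is in probability over the data.
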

\begin{proof}
By Theorem~2 of \cite{cuihannig2019}, we essentially need to check their Assumptions 1-3. Their Assumption~1 satisfies with their $\pi(t)= 1-F(t)$; their Assumption~2 satisfies as we assume true cumulative distribution function is absolutely continuous; their Assumption~3 satisfies as $$\int_0^t \frac{g_n(s)}{\sum_{i=1}^n I(T_i\geq s)} d [\sum_{i=1}^n I(T_i\leq s)] \rightarrow \int_0^t \frac{f(s)}{[1-F(s)]^2} ds,$$ for any $t$ such that $1-F(t)>0$ and any sequence of functions $g_n \rightarrow \frac{1}{1-F}$ uniformly.
\end{proof}
}

\yifan{
\section{Censoring mechanism}\label{app:censoring}
Consider the following data generating equation: 
\begin{align}\label{eq:dge}
    (L_i, R_i ]= G(V_i, \theta_i, T_i),\quad T_i=F^{-1}(U_i),\quad i=1,\ldots,n,
\end{align}
where $V_i, U_i$ are independent \tunif(0,1), and $G$ satisfies the following assumptions:
\begin{enumerate}[label=(\alph*)]
\item\label{asm:a}
$
    L_i <  T_i  \leq R_i,
$
for any $V_i$ and $\theta_i$;
\item\label{asm:b} for the observed $(L_i,R_i]$, any $T_i\in (L_i,R_i]$ and $V_i\in(0,1)$, there exists $\theta_i$ satisfying \eqref{eq:dge}.
\end{enumerate}
We assume that we only observe the intervals $(L_i,R_i]$, i.e., the true failure times $T_i$ are unobserved.}

\yifan{
The function $G$ determines the type of censoring and is assumed to be known. The unknown $\theta_i$ determines the censoring distribution and can be infinite dimensional. 
To demonstrate how \eqref{eq:dge} is used, we provide two classical censoring examples.
\begin{example}\textit{(Right-censoring)}
For the right-censored data, the function $G$ in \eqref{eq:dge} is defined as  follows: The unknown parameters $\theta_i$ are the distribution functions  $H_i$ of censoring times, and
\begin{align*}
    L_i= & T_i \wedge  H_i^{-1}(V_i | T_i = F^{-1}(U_i) ),\\
    R_i=& \begin{cases}
T_i, \text{~if~} T_i \leq H_i^{-1}(V_i | T_i = F^{-1}(U_i) );\\
\infty, \text{~if~} T_i > H_i^{-1}(V_i | T_i = F^{-1}(U_i) ).\\
\end{cases}
\end{align*}
The traditional right-censoring case observations $\{Y_i,\Delta_i\}$ then are $Y_i=L_i$ and $\Delta_i=I\{L_i=R_i\}$. Recall that, when $L_i=R_i=T_i$, \eqref{eq:dge} is modified to be non-empty.
\end{example}
\begin{example}\textit{(Case $K$ censoring)}
For the case $K$ censoring, the function $G$ in \eqref{eq:dge} is defined as  follows: The parameters are a collection of stochastically ordered distribution functions $\theta_i=\{H_{i,1},\ldots  H_{i,K}\}$, the inspection times
$C_{i,s}= H_{i,s}^{-1}(V_i | T_i = F^{-1}(U_i) )$, $i=1,\ldots,n$, $s=1,\ldots,K$, and 
\[
    L_i=  \sum_{s=0}^{K} C_{i,s} I\{C_{i,s} < T_i\leq   C_{i,s+1}\},\quad
    R_i= \sum_{s=0}^{K} C_{i,s+1} I\{C_{i,s} < T_i\leq   C_{i,s+1}\},
\]
where $C_{i,0}=0$ and $C_{i,K+1}=\infty$.
Note that, if all $C_{i,s}$ are deterministic, this can be viewed as rounded data. 
\end{example}
}

\yifan{
We now derive the generalized fiducial distribution based on this data generating equation
\eqref{eq:dge}. The corresponding inverse map for a single observation is:
\[
Q^{F,\theta_i}_{l_i,r_i}(u_i,v_i)
=\{F,\theta_i: F(l_i)< u_i\leq F(r_i), l_i <  G(v_i, \theta_i, F^{-1}(u_i)  ) \leq r_i\}.
\]
Consequently, the inverse map for the entire data is
 $Q^{F,\mathbf{\theta}}_{\bl,\br}(\bu,\bv)=\bigcap_i Q^{F,\theta_i}_{l_i,r_i}(u_i,v_i)$. 
Set 
\begin{equation*}
Q^F_{\bl,\br}(\bu) =
\{F\, :\ \, F(l_i)< u_i\leq F(r_i),\ i=1,\ldots,n\}.
\end{equation*}
Assumption~\ref{asm:b} implies that 
$Q^{F,\mathbf{\theta}}_{\bl,\br}(\bu,\bv)\neq\emptyset$ if and only if $Q^F_{\bl,\br}(\bu)\neq\emptyset$. Consequently, the marginal fiducial distribution is the distribution of $Q^F_{\bl,\br}(\bU^\star)$, where recall that the distribution of $\bU^\star$ is the uniform distribution on the set $\{\bu: Q^{F}_{\bl,\br}(\bu)\neq\emptyset\}$.}

\yifan{Note that this marginal fiducial distribution of $F$ agrees with the fiducial distribution derived in Section~\ref{sec:2.2}. In particular, it depends only on the observed values of $\bl,\br$ and not on the censoring mechanism. Therefore, as long as $G$ satisfies Assumptions~\ref{asm:a} and \ref{asm:b}, we do not need to know it.}

\bibliographystyle{asa}
\bibliography{fiducial}

\end{document}